\newtheorem{theorem}{Theorem}
\newtheorem{lemma}{Lemma}
\newtheorem{remark}{Remark}
\newcommand{\openmat}[1]{\left[\begin{array}{#1}}
\newcommand{\closemat}{\end{array}\right]}
\tikzset{
block/.style = {draw, fill=white, rectangle, minimum height=2.5em, minimum width=3em},
tmp/.style = {coordinate},
sum/.style= {draw, fill=white, circle, node distance=1cm},
input/.style = {coordinate},
output/.style= {coordinate},
pinstyle/.style = {pin edge={to-,thin,black}}}
\begin{document}
\title{Uncertainty Quantification of Data-Driven Output Predictors in the Output Error Setting} 
\author{Farzan Kaviani, Ivan Markovsky, and Hamid R. Ossareh 
\thanks{Farzan Kaviani and Hamid R. Ossareh are with the Department of Electrical and Biomedical Engineering, University of Vermont, Burlington, VT USA 05405. (e-mail: fkaviani@vum.edu, hossareh@vum.edu)}
\thanks{Ivan Markovsky is with the Catalan Institution for Research and Advanced Studies (ICREA), Pg. Lluis Companys 23, Barcelona, Spain. (e-mail: imarkovsky@cimne.upc.edu)}
}

\maketitle
\begin{abstract}
We revisit the problem of predicting the output of an LTI system directly using offline input-output data (and without the use of a parametric model) in the behavioral setting. Existing works calculate the output predictions by projecting the recent samples of the input and output signals onto the column span of a Hankel matrix consisting of the offline input-output data. However, if the offline data is corrupted by noise, the output prediction is no longer exact. While some prior works propose mitigating noisy data through matrix low-ranking approximation heuristics, such as truncated singular value decomposition, the ensuing prediction accuracy remains unquantified. This paper fills these gaps by introducing two {\it upper bounds} on the prediction error under the condition that the noise is sufficiently small relative to the offline data's magnitude. The first bound pertains to prediction using the raw offline data directly, while the second one applies to the case of low-ranking approximation heuristic. Notably, the bounds do not require the ground truth about the system output, relying solely on noisy measurements with a known noise level and system order. Extensive numerical simulations show that both bounds decrease monotonically (and linearly) as a function of the noise level. Furthermore, our results demonstrate that applying the de-noising heuristic in the output error setup does not generally lead to a better prediction accuracy as compared to using raw data directly, nor a smaller upper bound on the prediction error. However, it allows for a more general {\it upper bound}, as the first upper bound requires a specific condition on the partitioning of the Hankel matrix.
\end{abstract}

\begin{IEEEkeywords}
 Uncertainty quantification, Data-driven control, output prediction error bounds, truncated singular value decomposition de-noising
\end{IEEEkeywords}

\section{Introduction}
In recent years, there has been a growing research interest in data-driven methods for simulation and control \cite{MARKOVSKY202142}. 
One particularly notable framework is based on behavioral systems theory (see \cite{MARKOVSKY202142, tutorial}), 
which, thanks to the so-called \textit{fundamental lemma} \cite{MARKOVSKY202142, willems2005note, MARKOVSKY2023110657}, allows for predictions of the output based on offline input-output data. 
In \cite{markovsky2008data}, these predictors are used to simulate the response of LTI systems, and in  \cite{favoreel1999subspace}, they are used to formulate what is known as the subspace predictive controller. In a similar vein, \cite{coulson2019data} used these predictors to formulate an optimal control problem known as the Data Enabled Predictive Control (DeePC). 
Applications of DeePC are reported in \cite{tutorial}. 

Of practical interest is the situation where the offline data is corrupted by noise, which renders the output predictions inaccurate. To handle noise in the offline data, 
DeePC and similar optimal control approaches implement regularizations within the optimization process to enhance its robustness to noise. 
Another approach to handle noise, as employed, for example, in \cite{9468332},  involves de-noising heuristics such as the Truncated Singular Value Decomposition (TSVD) of the underlying data matrix
\cite{MARKOVSKY202142, eckart1936approximation,markovsky2012low}.
 While TSVD has been employed for de-noising in these references,  \cite{recentPaper,9468332,HO-FD:23}  
showed that the effectiveness of TSVD is not universal and may vary depending on the specific setting and implementation.

Given the successes of these data-driven predictors in research and practical settings, there is interest in the community to formally quantify the prediction accuracy of these predictors. In \cite{Stochastic_con}, confidence regions for data-driven prediction algorithms under the assumption of zero-mean Gaussian noise were proposed. Yet, the proposed confidence regions required either  parameters from the underlying state-space model or the noise-free output initial conditions, hindering their practical application in scenarios where these elements are unknown. Thus, to the best of our knowledge, the problem of uncertainty quantification of these output predictors is still largely open. Furthermore, there is no work that formally compares the prediction performance of the output predictors obtained from raw data as compared to those obtained from the de-noising heuristic based on TSVD.

To fill the above gap, this paper first introduces two {\it upper bounds} on the output prediction error in the case of inexact data. The first upper bound is on the output prediction error when the offline data is  used directly to predict the output. 
The second bound applies to the case when the TSVD method is utilized to preprocess the data before calculating the output prediction. 
In either case, we assume the noise is present only on the output data and not on the input data, as is commonly assumed in control systems literature. In other words, our examinations take place in the output error (OE) setting. 
  We model the output noise as a set-bounded signal with a known worst-case magnitude (as opposed to modeling it as a Gaussian noise as is done in \cite{Stochastic_con}). The upper bounds rely solely on the offline noisy data and the known noise level, which allows us to apply these bounds in a practical setting when the noise-free ground-truth data is not available. However, as we show, the derived bounds are only applicable to situations where the noise is ``sufficiently small" as compared to the underlying signal, as characterized by the minimum singular value of the Hankel data matrix comprised of offline noisy input-output data. The bounds can be used in control applications, for example by robustifying data-driven control algorithms such as DeePC against worst-case prediction errors. 
 
 To compare the effectiveness of these bounds and assess whether the TSVD method improves prediction accuracy in the OE setup, comprehensive numerical simulations are conducted. 
 The results show that both upper bounds are relatively small (and thus effective) when the noise level is sufficiently small. Specifically, they are shown to be monotonically (and linearly) decreasing as a function of the noise level. 
  Furthermore, the results indicate that  applying the de-noising heuristic in the output error setup does not generally lead to a better prediction accuracy as compared to using raw data directly, nor a smaller upper bound on the prediction error.
However, it does enable a more general upper bound on the output prediction error, as the first upper bound relies on a specific condition related to the partitioning of the Hankel data matrix.

The paper is organized as follows. The problem statement is provided in Section  \ref{sec:formulation}. The main results are reported in Section \ref{sec:main}. Numerical illustrations are provided in Section \ref{sec:analysis}. Conclusions and future work are reported in Section \ref{sec:conclusion}.

The notation throughout this paper is as follows. The sets $\mathbb{Z}^+$, $\mathbb{R}$, $\mathbb{R}^n$, and $\mathbb{R}^{n\times n}$ denote the set of non-negative integers, real numbers, $n$-dimensional vectors of real numbers, and $n\times n$ matrices with real entries, respectively. Unless otherwise stated, we use the variable $t \in \mathbb{Z}^+$ to denote the discrete time index. For a sequence of matrices $X_1, \ldots, X_n$ with the same number of columns, we denote $[X_1^\top, \ldots, X_n^\top]^\top$ by \textbf{col}$(X_1, \ldots, X_n)$. The vectors $\mathbf{1}$ and $\mathbf{0}$ denote vectors of all ones and all zeros, respectively, where the dimensionality is inferred from the context. For a signal $u(t) \in \mathbb{R}^m, t=0,\ldots,T-1$, we use boldface to denote the signal as a vector of vectors, i.e., $\mathbf{u} = \mathbf{col}(u(0), u(1), \ldots, u(T-1))$.

\section{Problem Formulation}\label{sec:formulation}
Consider an $n$-th order causal LTI system with input vector ${u(t) \in \mathbb{R}^m}$ and output vector $y(t) \in \mathbb{R}^p$.
Suppose a parametric (e.g., state-space or transfer function) model for the system is not available, but we have collected finite-length, noise-free input-output data, stored in vectors $u_d(t)$ and $y_d(t)$, where $t=0,\ldots,L-1$ and subscript $d$ refers to ``data". Using this data, we construct the Hankel matrix of order $T \in \mathbb{Z}^+$, denoted by $\mathcal{H}$, whose columns consist of length-$T$ input and output sub-trajectories:
\begin{equation}\label{eq:H}
\mathcal{H} = \openmat{c} \mathcal{H}_u \\ \mathcal{H}_y \closemat \in \mathbb{R}^{(m+p)T\times M}
\end{equation}
where $M = L-T+1$ is the number of columns, and
$$
\mathcal{H}_u = \openmat{c c c c}
u_d(0) & u_d(1) & \cdots & u_d(L-T) \\
\vdots & \vdots & \ddots & \vdots \\
u_d(T-1) & u_d(T) & \cdots & u_d(L-1) \vspace{0.15cm} 
\closemat,
$$
$$
\mathcal{H}_y = 
\openmat{c c c c}
y_d(0) & y_d(1) & \cdots & y_d(L-T) \\
\vdots & \vdots & \ddots & \vdots \\
y_d(T-1) & y_d(T) & \cdots & y_d(L-1) 
\closemat.
$$

If the Hankel matrix satisfies the generalized persistency of excitation condition \cite{identifiability}, namely
\begin{equation}\label{eq:rank cond}
\text{rank}(\mathcal{H}) = mT+n,
\end{equation}
then any length-$T$ trajectory \textbf{col}$(\mathbf{u},\mathbf{y})\in\mathbb{R}^{(m+p)T}$ will belong to the column space of $\mathcal{H}$. For later use, we introduce the following notation:
\begin{equation}\label{eq:r}
r \triangleq mT+n.
\end{equation}
Said differently, if \eqref{eq:rank cond} is met, there must exist a (non-unique) vector $g$, such that
\begin{equation}\label{eq:uy}
\mathcal{H} g = \left[\begin{array}{c} \mathbf{u} \\ \mathbf{y} \end{array}\right]
\end{equation}
for any $T$-samples long trajectory $\textbf{col}(\mathbf{u},\mathbf{y})$ of the system. This idea has been used in the literature for the purpose of data-driven simulation and prediction \cite{markovsky2008data}. This is done by partitioning the output trajectory, $\mathbf{y}$, into two parts, one of length $T_p \geq 1$ that serves to implicitly fix the initial condition, and of length $T_f \geq 1$ that serves as the predicted output: $\mathbf{y} = \mathrm{\textbf{col}}(\mathbf{y_{ini}}, \mathbf{y_{pred}})$ where $\mathbf{y_{ini}} \in \mathbb{R}^{pT_p}$ and  $\mathbf{y_{pred}} \in \mathbb{R}^{pT_f}$, and $T_p+T_f = T$. Similarly, partition the input as $\mathbf{u} = \mathrm{\textbf{col}} (\mathbf{u_{ini}}, \mathbf{u_{pred}})$, and accordingly the Hankel matrix as $\mathcal{H}_u = \mathrm{\textbf{col}}(U_p, U_f)$ and $\mathcal{H}_y = \mathrm{\textbf{col}}(Y_p, Y_f)$. We can then express \eqref{eq:uy} as:
\begin{equation}\label{eq:partition}
\left[\begin{array}{c}U_p\\U_f\\Y_p\\Y_f\end{array}\right] g = \left[\begin{array}{c}\mathbf{u_{ini}}\\\mathbf{u_{pred}}\\\mathbf{y_{ini}}\\\mathbf{y_{pred}}\end{array}\right]
\end{equation}
As discussed in \cite{MARKOVSKY202142}, if $\mathbf{col}(\mathbf{u_{ini}},\mathbf{y_{ini}})$ is a valid length-$T_p$ trajectory of the system and $T_p \geq \ell$, where $\ell$ is the lag or observability index of the system, one can uniquely solve for the latent initial condition and, thus, uniquely compute $\mathbf{y_{pred}}$. That being said, the vector $g$ is generally not uniquely determined. To solve for $\mathbf{y_{pred}}$, the minimum-norm solution to $g$ is often employed:
\begin{equation}\label{eq:g*}
g^* = \left[\begin{array}{c}U_p\\U_f\\Y_p \end{array}\right]^\dagger\left[\begin{array}{c}\mathbf{u_{ini}}\\\mathbf{u_{pred}}\\\mathbf{y_{ini}} \end{array}\right]
\end{equation}
and so $\mathbf{y_{pred}} = Y_f g^*$ or, equivalently,
\begin{equation}\label{eq:ypred}
\mathbf{y_{pred}} = Y_f \left[\begin{array}{c}U_p\\U_f\\Y_p \end{array}\right]^\dagger\left[\begin{array}{c}\mathbf{u_{ini}}\\\mathbf{u_{pred}}\\\mathbf{y_{ini}} \end{array}\right]
\end{equation}

In many practical situations, while the inputs are exact, the output measurements are corrupted by noise and therefore are not exact\footnote{In some situations, the inputs may be corrupted by noise as well, see for example \cite{eiv-book}. Although the result in this paper can be easily extended to these situations, we will not consider them here for brevity.}. We model the measured output as the clean output plus additive noise: 
\begin{equation}
\label{eq:noise_model_2}
y_m(t) = y(t) + n(t), 
\end{equation}
where $n(t)$ is random and, thus, unknown, but it is magnitude-bounded. That is, there exists a known $N\in\mathbb{R}$ such that:
\begin{equation}
\label{eq:noise_model_1}
\|n(t)\|\leq N, \forall t
\end{equation}
The noise degrades the accuracy of the data-driven predictor in \eqref{eq:ypred} for two reasons: \textit{i}) the noise perturbs every element of the initial output vector, $\mathbf{y_{ini}}$, and \textit{ii}) the noise affects the offline data, $y_d(t)$, and thus perturbs every element of $Y_p$ and $Y_f$. Applying the approach described above to the perturbed matrices gives rise to the following predicted output:
\begin{equation}\label{eq:ypredm}
\mathbf{\tilde{y}_{pred}} = (Y_f+\Delta_2) \begin{bmatrix}
U_p\\
U_f\\
Y_p+\Delta_1\\
\end{bmatrix}^\dagger
\begin{bmatrix}
\mathbf{u_{ini}}\\
\mathbf{u_{pred}}\\
\mathbf{y_{ini}} + \delta\\
\end{bmatrix},
\end{equation}
where $\delta \in \mathbb{R}^{pT_p}$, $\Delta_1 \in \mathbb{R}^{pT_p\times M}$, and ${\Delta_2 \in \mathbb{R}^{pT_f\times M}}$ are unknown vector and matrices whose every element is bounded in magnitude by $N$.

As mentioned in the introduction,  to mitigate the effects of noise, the Hankel matrix is sometimes pre-processed first using a low-rank heuristic to enforce condition \eqref{eq:rank cond}. This is done using the truncated singular value decomposition (TSVD) method \cite[Definition 4]{Vu_2021}. Specifically, the rank-$r$ (with $r$ defined in \eqref{eq:r}) TSVD of $\mathcal{H}$ is given by:
$$
\mathcal{\hat{H}}=\hat{U}\hat{\Sigma}\hat{V}^{\intercal} = \left[\begin{array}{c}\hat{U}_p\\\hat{U}_f\\\hat{Y}_p\\\hat{Y}_f\end{array}\right],
$$
where $\hat{U} \in \mathbb{R}^{(m+p)T\times r}$ and $\hat{V} \in \mathbb{R}^{M\times r}$ are unitary matrices and $\hat{\Sigma}\in \mathbb{R}^{r\times r}$ is a diagonal matrix containing the first $r$ singular values of $\mathcal{H}$. We could now employ \eqref{eq:ypred} to predict the output using sub-blocks derived from $\hat{\mathcal{H}}$:
\begin{equation}\label{eq:ypredtsvd}
\mathbf{\hat{y}_{pred}} = \hat{Y}_f \begin{bmatrix}
\hat{U}_p\\
\hat{U}_f\\
\hat{Y}_p\\
\end{bmatrix}^\dagger
\begin{bmatrix}
\mathbf{u_{ini}}\\
\mathbf{u_{pred}}\\
\mathbf{y_{ini}} + \delta\\
\end{bmatrix}
\end{equation}

We are ready to formally state the main problem addressed in this paper. Consider an $m$-input LTI system of known order $n$ and lag $\ell$, integers $T_p$ and $T_f$ that satisfy $T_p \geq \ell$ and $T_f \geq 1$, and offline data $u_d(t)$ and $y_d(t)$, $t=0,\ldots,L-1$, that satisfy condition \eqref{eq:rank cond} with $T=T_f+T_p$. Suppose we know $u_d(t)$ exactly but not $y_d(t)$. Instead, we have collected a noise-corrupted version of the output as described by Eq. \eqref{eq:noise_model_2}--\eqref{eq:noise_model_1} with a known bound $N$. The problem is to find upper bounds on estimation errors ${\|\mathbf{\tilde{y}_{pred}}-\mathbf{y_{pred}}\|_2}$ and $\|\mathbf{\hat{y}_{pred}}-\mathbf{y_{pred}}\|_2$, where $\mathbf{y_{pred}}, \mathbf{\tilde{y}_{pred}}$, and $\mathbf{\hat{y}_{pred}}$ are defined in \eqref{eq:ypred}, \eqref{eq:ypredm}, and \eqref{eq:ypredtsvd}.

\begin{remark}
    Disturbances and nonlinear effects may also be handled by \eqref{eq:noise_model_2}, \eqref{eq:noise_model_1}. This requires modeling the impact of these disturbances and nonlinearities as a magnitude-bounded noise on the output so that the framework can be applied.
\end{remark}

\begin{remark}
    While the Hankel matrix structure in \eqref{eq:H} is commonly used in the literature, it is not the only matrix structure that can be used for the purpose of data-driven prediction. Other possible structures are the Page matrix or the trajectory matrix, see \cite{MARKOVSKY202142}. The results of this paper are applicable to these structures as well. 
\end{remark}

\section{Main Results}\label{sec:main}

\subsection{Preliminaries}

We begin by introducing the following notation to simplify the presentation:
$$
H_1 = \left[\begin{array}{c}U_p\\U_f\\Y_p \end{array}\right], \tilde{H}_1 = \left[\begin{array}{c}U_p\\U_f\\Y_p + \Delta_1 \end{array}\right], \hat{H}_1 = \left[\begin{array}{c}\hat{U}_p\\\hat{U}_f\\\hat{Y}_p  \end{array}\right],  
$$
$$
\tilde{Y}_f = Y_f +\Delta_2,
h = \begin{bmatrix}
\mathbf{u_{ini}}\\
\mathbf{u_{pred}}\\
\mathbf{y_{ini}}\\
\end{bmatrix},
\tilde{h} = \begin{bmatrix}
\mathbf{u_{ini}}\\
\mathbf{u_{pred}}\\
\mathbf{y_{ini}} + \delta\\
\end{bmatrix}
$$
With this notation, $\mathbf{y_{pred}}$ in \eqref{eq:ypred}, $ \mathbf{\tilde{y}_{pred}}$ in \eqref{eq:ypredm}, and $\mathbf{\hat{y}_{pred}}$ in \eqref{eq:ypredtsvd} can be expressed as:
$$
\mathbf{y_{pred}} = Y_f H_1^\dagger h, \;\;\;\;
\mathbf{\tilde{y}_{pred}} = \tilde{Y}_f \tilde{H}_1^\dagger \tilde{h}, \;\;\;\;
\mathbf{\hat{y}_{pred}} = \hat{Y}_f \hat{H}_1^\dagger \tilde{h}
$$

For a rank-$k$ matrix, $A$, with non-zero singular values $\sigma_i(A)$ for $i = 1, \ldots, k$, we denote:
\begin{align}
&\quad\sigma_{\text{min}}(A) \triangleq \min \{ \sigma_1(A), \ldots, \sigma_k(A) \},\nonumber\\
&\quad\sigma_{\text{max}}(A) \triangleq \max \{ \sigma_1(A), \ldots, \sigma_k(A) \}.\nonumber
\end{align}
Additionally, for $k\geq r$, with $r$ defined in \eqref{eq:r}, we define:
\begin{equation}\label{eq:deltaSN}
\begin{aligned}
\delta_{\text{SN}}(A)&\triangleq\sigma_{r}(A) - \sqrt{pT_pM}N 
\\
\sigma_{\text{sq}}(A) &\triangleq \max\left\{\left(\frac{1}{\delta_{\text{SN}}(A)}\right)^2, \left(\frac{1}{\sigma_{\text{min}}(A)}\right)^2\right\}
\end{aligned}
\end{equation}
which will be used in the following subsections. 

We now present two lemmas, one that pertains to bounds on our perturbation matrices, and another that establishes the rank of $H_1$.
\begin{lemma}
\label{lemm:Lemma_1}
The perturbation matrices satisfy:
$$
\|\delta\|_2 \leq \sqrt{pT_p} N
$$
$$\|\Delta_1\|_F \leq \sqrt{pT_pM}N$$
$$\|\Delta_2\|_F\leq \sqrt{pT_f M}N$$ 
\end{lemma}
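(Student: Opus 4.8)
The plan is to reduce all three inequalities to one elementary fact: if a real vector (resp.\ matrix) has every entry bounded by $N$ in magnitude, then its $\ell_2$ norm (resp.\ Frobenius norm) is at most $\sqrt{d}\,N$, where $d$ is the number of entries. So the whole argument is an entry-counting estimate.

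First I would translate the noise model into an entrywise bound. The hypothesis \eqref{eq:noise_model_1} bounds the Euclidean norm of each $p$-dimensional vector $n(t)$, and since any scalar component $n_i(t)$ satisfies $|n_i(t)| \le \|n(t)\|_2 \le N$, every scalar noise sample is bounded by $N$. The vector $\delta$ and the matrices $\Delta_1,\Delta_2$ appearing in \eqref{eq:ypredm} are, by construction, assembled (in Hankel/stacked arrangement) from copies of these scalar samples $n_i(t)$, so every entry of $\delta$, $\Delta_1$, and $\Delta_2$ is likewise bounded by $N$ in magnitude.

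Then I would simply count entries. The vector $\delta$ lives in $\mathbb{R}^{pT_p}$, so $\|\delta\|_2^2 = \sum_{k=1}^{pT_p}\delta_k^2 \le pT_p\,N^2$, giving $\|\delta\|_2 \le \sqrt{pT_p}\,N$. The matrix $\Delta_1$ perturbs $Y_p \in \mathbb{R}^{pT_p\times M}$, hence has $pT_p$ rows and $M$ columns, so $\|\Delta_1\|_F^2 = \sum_{i,j}(\Delta_1)_{ij}^2 \le pT_pM\,N^2$ and $\|\Delta_1\|_F \le \sqrt{pT_pM}\,N$; identically, $\Delta_2$ perturbs $Y_f \in \mathbb{R}^{pT_f\times M}$, so $\|\Delta_2\|_F \le \sqrt{pT_fM}\,N$.

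There is no genuine obstacle here; the only thing to be careful about is the bookkeeping, namely reading the row/column dimensions $pT_p\times M$ and $pT_f\times M$ correctly off the partition $\mathcal{H}_y = \mathbf{col}(Y_p,Y_f)$, and recognizing that the stated bounds are the (slightly conservative) entrywise estimates rather than the tighter $\sqrt{T_p}\,N$-type bounds obtainable from $\|n(t)\|_2 \le N$ directly — the looser form is retained since it suffices for the subsequent results.
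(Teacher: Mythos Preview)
Your proposal is correct and follows essentially the same approach as the paper: both arguments observe that $\delta$, $\Delta_1$, and $\Delta_2$ have $pT_p$, $pT_pM$, and $pT_fM$ entries respectively, each bounded in magnitude by $N$, and then invoke the definitions of the $2$-norm and Frobenius norm. Your write-up is somewhat more explicit about why the scalar entries inherit the bound $N$ from the vector bound $\|n(t)\|\le N$, but the underlying idea is identical.
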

\begin{proof}
See the Appendix.
\end{proof}
The above lemma allows us to assess and quantify the effects of perturbation by bounding the noise through known elements. 

\begin{lemma}
\label{lemm:H_1 rank}
    Let rank condition \eqref{eq:rank cond} be satisfied. Then we have that $\text{rank}(H_1) = \text{rank}(\mathcal{H}) = r$.
\end{lemma}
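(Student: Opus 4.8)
The plan is to use the fact that $H_1$ is obtained from $\mathcal{H}$ by deleting the bottom block $Y_f$, i.e. $\mathcal{H} = \mathbf{col}(H_1, Y_f)$, so $\mathrm{rank}(H_1) \le \mathrm{rank}(\mathcal{H}) = r$ is immediate; the work is the reverse inequality. First I would recall the structural interpretation behind the rank condition \eqref{eq:rank cond}: by the fundamental lemma, the column span of $\mathcal{H}$ is exactly the set of length-$T$ trajectories $\mathbf{col}(\mathbf{u},\mathbf{y})$ of the system, a subspace of dimension $r = mT+n$. Equivalently, $\mathcal{H}$ factors as $\mathcal{H} = \mathcal{O} X$ where $X \in \mathbb{R}^{r \times M}$ has full row rank $r$ (it collects the "free" parameters — the $mT$ input samples and the $n$-dimensional initial state of each column), and $\mathcal{O} \in \mathbb{R}^{(m+p)T \times r}$ is the injective map sending those parameters to the corresponding trajectory.

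The key step is then to argue that the corresponding submap $\mathcal{O}_1 \in \mathbb{R}^{(mT + pT_p)\times r}$ — the rows of $\mathcal{O}$ that produce $\mathbf{col}(\mathbf{u_{ini}}, \mathbf{u_{pred}}, \mathbf{y_{ini}})$ — is still injective, so that $H_1 = \mathcal{O}_1 X$ has rank $r$. Injectivity of $\mathcal{O}_1$ is exactly the statement that the input trajectory together with the first $T_p$ output samples determine the parameters: the $mT$ input samples are read off directly, and since $T_p \ge \ell$, the initial $T_p$ outputs (together with the inputs) uniquely determine the initial state by observability. This is precisely the hypothesis invoked right after \eqref{eq:partition} in the excerpt ("if $\mathbf{col}(\mathbf{u_{ini}},\mathbf{y_{ini}})$ is a valid trajectory and $T_p \ge \ell$, one can uniquely solve for the latent initial condition"). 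Hence $\mathrm{rank}(H_1) = \mathrm{rank}(\mathcal{O}_1 X) = r$ because $X$ has full row rank and $\mathcal{O}_1$ has trivial kernel. Combining with the trivial upper bound gives $\mathrm{rank}(H_1) = r = \mathrm{rank}(\mathcal{H})$.

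I expect the main obstacle to be stating the $\mathcal{H} = \mathcal{O}X$ factorization and the injectivity of $\mathcal{O}_1$ cleanly without re-deriving the fundamental lemma; depending on how much the paper wants to assume, one may instead prefer a more elementary route: take any $g$ with $H_1 g = 0$, observe that $U_p g = U_f g = 0$ and $Y_p g = 0$ mean $\mathcal{H}_u g = 0$ and the first $T_p$ output block vanishes, conclude $\mathcal{H}g = \mathbf{col}(0, Y_f g)$ is a system trajectory with zero input and zero initial condition (length $T_p \ge \ell$), hence by uniqueness of the response it is the zero trajectory, so $Y_f g = 0$ and therefore $\mathcal{H} g = 0$; thus $\ker H_1 \subseteq \ker \mathcal{H}$, and since the reverse inclusion is obvious, $\ker H_1 = \ker \mathcal{H}$, which by rank–nullity (both matrices having $M$ columns) yields $\mathrm{rank}(H_1) = \mathrm{rank}(\mathcal{H}) = r$. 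I would present this kernel argument as the main proof since it is self-contained and only uses the determinism/observability fact already cited in the text.
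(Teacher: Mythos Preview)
Your proposal is correct, and the underlying idea coincides with the paper's: both arguments hinge on the fact that, because $T_p \ge \ell$, the block $Y_f$ is determined by $(U_p,U_f,Y_p)$, so deleting $Y_f$ from $\mathcal{H}$ cannot drop the rank. The execution, however, differs. The paper works directly in state-space: it writes $Y_f = \mathcal{O}_{T_f} X + \mathcal{T}_{T_f} U_f$ with the extended observability and Toeplitz matrices, then uses $T_p \ge \ell$ to express the latent state sequence $X$ as an explicit linear function of $U_p$ and $Y_p$, concluding that the rows of $Y_f$ lie in the row span of $H_1$. Your preferred route is the dual column-kernel argument: any $g$ with $H_1 g = 0$ gives a trajectory $\mathcal{H}g$ with zero input and vanishing first $T_p$ outputs, hence zero initial state by observability, hence $Y_f g = 0$, so $\ker H_1 = \ker \mathcal{H}$ and rank--nullity finishes. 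Your version is slightly more self-contained (no explicit $\mathcal{O}_t$, $\mathcal{T}_t$, $\mathcal{C}_t$ matrices need to be introduced), while the paper's version is more constructive and makes the linear dependence of $Y_f$ on the other blocks explicit; either would serve equally well here.
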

\begin{proof}
    See the appendix.
\end{proof}

In the next subsections, we leverage the above lemmas to provide upper bounds on the prediction errors defined earlier.

\subsection{Upper bound on $\|\mathbf{\tilde{y}_{pred}}-\mathbf{y_{pred}}\|_2$}

The following theorem provides an upper bound on the prediction error $\|\mathbf{\tilde{y}_{pred}}-\mathbf{y_{pred}}\|_2$, i.e., when the output prediction is calculated using the noisy offline data directly without any preprocessing.

\begin{theorem} \label{thm:no tsvd}
Suppose $\delta_{\mathrm{SN}}(\tilde{H}_1) > 0$. Then:
\begin{align}
\label{eq:bound_1}
     & \|\mathbf{\tilde{y}_{pred}} -\mathbf{y_{pred}} \|_2 \leq  \sqrt{2}\sigma_{\mathrm{sq}}(\tilde{H}_1)\sqrt{pT_{p}M}N \nonumber\\
     & \quad \times \left( \|\tilde{Y}_f\|_F+\sqrt{pT_{f}M}N\right) \left( \|\tilde{h}\|_2+\sqrt{pT_{p}}N\right)\nonumber\\
     &\quad + \|\tilde{H}_1^{\dagger}\|_F \sqrt{pT_{p}}N\left( \|\tilde{Y}_f\|_F+\sqrt{pT_{f}M}N\right)  \nonumber\\
     &\quad +\sqrt{pT_{f}M}N\|\tilde{H}_1^{\dagger}\tilde{h}\|_F 
\end{align}
\end{theorem}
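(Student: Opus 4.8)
The plan is to write $\mathbf{\tilde y_{pred}} - \mathbf{y_{pred}} = \tilde Y_f \tilde H_1^\dagger \tilde h - Y_f H_1^\dagger h$ and expand it by a sequence of telescoping substitutions, peeling off one perturbation at a time. Concretely, I would insert and subtract intermediate terms so that
$$\tilde Y_f \tilde H_1^\dagger \tilde h - Y_f H_1^\dagger h = (\tilde Y_f - Y_f)\tilde H_1^\dagger \tilde h + Y_f(\tilde H_1^\dagger - H_1^\dagger)\tilde h + Y_f H_1^\dagger(\tilde h - h),$$
then apply the triangle inequality together with submultiplicativity of the Frobenius/spectral norms. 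The first term contributes $\|\Delta_2\|_F \|\tilde H_1^\dagger\|_2 \|\tilde h\|_2$, the third contributes $\|Y_f\|_F \|H_1^\dagger\|_2 \|\delta\|_2$, and Lemma~\ref{lemm:Lemma_1} converts every $\Delta$- and $\delta$-norm into the explicit $\sqrt{pT_\bullet M}\,N$ quantities. I would also rewrite $\|Y_f\|_F \le \|\tilde Y_f\|_F + \|\Delta_2\|_F$ and $\|\tilde h\|_2$ as it already appears, matching the $(\|\tilde Y_f\|_F + \sqrt{pT_fM}N)$ and $(\|\tilde h\|_2 + \sqrt{pT_p}N)$ factors in the statement.

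The crux is bounding the pseudoinverse difference $\|\tilde H_1^\dagger - H_1^\dagger\|$. Here I would invoke the standard perturbation theorem for the Moore–Penrose pseudoinverse (Wedin's theorem): if $\tilde H_1 = H_1 + E$ with $E = \mathbf{col}(0,0,\Delta_1)$ and, crucially, $\mathrm{rank}(\tilde H_1) = \mathrm{rank}(H_1) = r$ (equal rank, the acute-perturbation regime), then
$$\|\tilde H_1^\dagger - H_1^\dagger\|_2 \le \sqrt{2}\,\|\tilde H_1^\dagger\|_2\,\|H_1^\dagger\|_2\,\|E\|_2 \le \sqrt{2}\,\|\tilde H_1^\dagger\|_2\,\|H_1^\dagger\|_2\,\|\Delta_1\|_F.$$
Lemma~\ref{lemm:H_1 rank} gives $\mathrm{rank}(H_1) = r$, and the hypothesis $\delta_{\mathrm{SN}}(\tilde H_1) > 0$ is exactly what forces $\mathrm{rank}(\tilde H_1) = r$ as well: since $\|\Delta_1\|_F \le \sqrt{pT_pM}N < \sigma_r(\tilde H_1)$, Weyl's inequality applied to $H_1 = \tilde H_1 - E$ shows $\sigma_r(H_1) > 0$ while $\sigma_{r+1}$ stays zero, so the ranks agree. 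The remaining subtlety is that the bound should be expressed through $\tilde H_1$ alone (the only observable matrix). So I would bound $\|H_1^\dagger\|_2 = 1/\sigma_r(H_1) \le 1/(\sigma_r(\tilde H_1) - \|\Delta_1\|_F) = 1/\delta_{\mathrm{SN}}(\tilde H_1)$ by Weyl again, and then $\|\tilde H_1^\dagger\|_2 \cdot \|H_1^\dagger\|_2 \le \max\{1/\delta_{\mathrm{SN}}(\tilde H_1)^2, 1/\sigma_{\min}(\tilde H_1)^2\} = \sigma_{\mathrm{sq}}(\tilde H_1)$, which is precisely why $\sigma_{\mathrm{sq}}$ was defined. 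This yields the $\sqrt 2\,\sigma_{\mathrm{sq}}(\tilde H_1)\sqrt{pT_pM}N(\|\tilde Y_f\|_F + \sqrt{pT_fM}N)(\|\tilde h\|_2 + \sqrt{pT_p}N)$ leading term.

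For the last term of the bound I would keep $\|Y_f H_1^\dagger(\tilde h - h)\|$ in a slightly different grouped form: note $Y_f H_1^\dagger = \mathbf{y_{pred}}$-operator evaluated structure, but more directly I would bound the third telescoping term by first replacing $Y_f H_1^\dagger \delta$ — actually, to land exactly on $\sqrt{pT_fM}N\,\|\tilde H_1^\dagger \tilde h\|_F$, I expect the intended split is slightly different: write $Y_f H_1^\dagger h = Y_f H_1^\dagger h$, use $\|\tilde Y_f \tilde H_1^\dagger \tilde h\|$ appearing via $\Delta_2 \tilde H_1^\dagger \tilde h$, giving the $\sqrt{pT_fM}N\,\|\tilde H_1^\dagger\tilde h\|_F$ term; and the $\delta$ contribution then rides on $\|\tilde H_1^\dagger\|_F\sqrt{pT_p}N(\|\tilde Y_f\|_F + \sqrt{pT_fM}N)$. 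So I would reorganize the telescoping as $\tilde Y_f\tilde H_1^\dagger\tilde h - Y_f H_1^\dagger h = \Delta_2 \tilde H_1^\dagger \tilde h + Y_f(\tilde H_1^\dagger\tilde h - H_1^\dagger h)$, then split the second piece into the pseudoinverse-perturbation part and the $\delta$ part, and on the $\delta$ part use $\|Y_f\| \le \|\tilde Y_f\| + \|\Delta_2\|$. The main obstacle, and the step I would be most careful about, is the pseudoinverse perturbation bound: verifying the equal-rank hypothesis rigorously (so that the clean $\sqrt 2$ Wedin bound applies rather than the weaker rank-deficient version with an extra additive $\sigma_r$ term) and tracking that every spectral-to-Frobenius norm relaxation goes in the safe direction. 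Everything else is triangle inequality plus Lemma~\ref{lemm:Lemma_1} bookkeeping.
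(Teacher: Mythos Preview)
Your telescoping decomposition, the use of Weyl's inequality to replace $1/\sigma_r(H_1)$ by $1/\delta_{\mathrm{SN}}(\tilde H_1)$, and the Lemma~\ref{lemm:Lemma_1} bookkeeping all match the paper's proof. The second reorganization you describe, namely
\[
\tilde Y_f\tilde H_1^\dagger\tilde h - Y_f H_1^\dagger h \;=\; \Delta_2\,\tilde H_1^\dagger\tilde h \;+\; Y_f(\tilde H_1^\dagger-H_1^\dagger)h \;+\; Y_f\,\tilde H_1^\dagger\delta_h,
\]
is exactly the three-term split the paper uses.

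There is, however, a genuine gap in the pseudoinverse perturbation step. You invoke Wedin's bound in product form, $\|\tilde H_1^\dagger-H_1^\dagger\|\le\sqrt{2}\,\|\tilde H_1^\dagger\|_2\|H_1^\dagger\|_2\|E\|$, which requires $\mathrm{rank}(\tilde H_1)=\mathrm{rank}(H_1)=r$. Your justification that $\delta_{\mathrm{SN}}(\tilde H_1)>0$ forces this equality is incorrect: the hypothesis $\sigma_r(\tilde H_1)>\sqrt{pT_pM}\,N$ says nothing about $\sigma_{r+1}(\tilde H_1)$, and Weyl applied to $\tilde H_1=H_1+E$ only gives $\sigma_{r+1}(\tilde H_1)\le\|E\|_2$, not $\sigma_{r+1}(\tilde H_1)=0$. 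In fact, adding noise $\Delta_1$ to $Y_p$ generically \emph{raises} the rank of $\tilde H_1$ to $mT+pT_p$, which exceeds $r=mT+n$ whenever $pT_p>n$. The paper explicitly treats the case $k=\mathrm{rank}(\tilde H_1)>r$ in the discussion immediately following the theorem, so the bound is meant to hold there too.

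The fix is what the paper actually does: cite Stewart's Theorem~3.3, which holds \emph{without} any rank hypothesis and gives the max form
\[
\|\tilde H_1^\dagger-H_1^\dagger\|_F \;\le\; \sqrt{2}\,\max\bigl\{\|H_1^\dagger\|_2^2,\|\tilde H_1^\dagger\|_2^2\bigr\}\,\|\Delta_1\|_F.
\]
Then $\|H_1^\dagger\|_2^2=1/\sigma_r(H_1)^2\le 1/\delta_{\mathrm{SN}}(\tilde H_1)^2$ by Weyl (exactly as you argue), and $\|\tilde H_1^\dagger\|_2^2=1/\sigma_{\min}(\tilde H_1)^2$, so the max is bounded by $\sigma_{\mathrm{sq}}(\tilde H_1)$ directly. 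Everything downstream is unchanged.
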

\begin{proof}
See the Appendix.
\end{proof}
Theorem \ref{thm:no tsvd} is practical in the sense that the upper bound relies solely on known elements and not on the ground truth about the system model or the underlying noise-free output, which may be unknown, making the bound applicable to any dataset that satisfies the $\delta_{\text{SN}}(\tilde{H}_1) > 0$ condition, which can also be verified using known elements. Of course, the system order, $n$, and noise level, $N$, are assumed to be known. See Remark 3 below for a discussion.

As it turns out, the terms $\|\tilde{H}_1^\dagger\|_F$ and $\sigma_{\text{sq}}(\tilde{H}_1)$ in \eqref{eq:bound_1} may grow unbounded as $N$ tends to 0 (i.e., small noise level), rendering the bound too loose to be useful in practice. We now investigate conditions under which this unbounded growth occurs and how it might be avoided. 
Our analysis relies on the rank of the matrices $H_1$ and $\tilde{H}_1$. 
Let the rank of $\tilde{H}_1$ be denoted by $k$, i.e., $\sigma_{\min}(\tilde{H}_1)=\sigma_k(\tilde{H}_1)$. Recall from Lemma \ref{lemm:H_1 rank} that $\text{rank}(H_1)=r$, which we know almost surely satisfies $r \leq k$. We thus consider two cases: $r<k$ and $r=k$. If $r < k$, we have that $\sigma_k(H_1)=0$, and so we get:
\begin{align} \label{eq:big_sigma}
    & |\sigma_{k}(\tilde{H}_1) - \sigma_{k}(H_1)| \leq \|\Delta_1\|_2 \nonumber \Rightarrow \sigma_{\text{min}}(\tilde{H}_1) \leq \|\Delta_1\|_2 \leq \|\Delta_1\|_F  \nonumber\\
    & \quad \Rightarrow \frac{1}{\sigma_{\text{min}}(\tilde{H}_1)} = \sigma_{\text{max}}(\tilde{H}^{\dagger}_1) \geq \frac{1}{\|\Delta_1\|_F},
\end{align}
where the first line is obtained using Weyl's inequality \cite[Proposition 1]{Vu_2021}.
 Eq. \eqref{eq:big_sigma} reveals an issue, which was also alluded to in \cite{Stewart1977}, namely a small $\|\Delta_1\|_F$ can lead to a large value of $\sigma_{\text{max}}(\tilde{H}^{\dagger}_1)$ and, in turn, a large value for $\|\tilde{H}_1^\dagger\|_F$ (this follows from the fact that $\|\tilde{H}_1^\dagger\|_F^2 = \sum \sigma_i^2$). 
 To see the impact of large $\|\tilde{H}_1^\dagger\|_F^2$ on our upper bound, we  further simplify \eqref{eq:big_sigma} using Lemma \ref{lemm:Lemma_1}, which results in:
 $$
 \|\tilde{H}_1^\dagger\|_F \geq \frac{1}{N \sqrt{pT_p M}}
 $$
 Using this expression, we can lower bound $\sigma_{\text{sq}}(\tilde{H}_1)$ as well:
 $$
 \sigma_{\text{sq}}(\tilde{H}_1) \geq \left(\frac{1}{N \sqrt{pT_p M}}\right)^2
 $$
 Now, if $N$ is sufficiently small, the offline data dominates the perturbation to the extent that we can render $\|\Delta_2\|_F$ and $\|\delta\|_2$ negligible in comparison to $\|\tilde{Y}_f\|_F$ and $\|\tilde{h}\|_2$. Of course, this requires $\|{Y}_f\|_F$ and $\|{h}\|_2$ to be non-zero. In such a situation, the right hand side of \eqref{eq:bound_1} is larger than:
 $$
  \frac{\sqrt{2}}{\sqrt{pT_{p}M}N}\|\tilde{Y}_f\|_F \|\tilde{h}\|_2
 $$
 which diverges as $N$ tends to 0, implying that the right hand side of \eqref{eq:bound_1} diverges.
 Thus, if $r<k$ and the noise is small, our upper bound may be too loose to be useful. We can use TSVD to mitigate this problem, as discussed in the next subsection.

 In the second case, where $k=r$, the bound in \eqref{eq:big_sigma} is no longer applicable and so the unbounded growth of $\|\tilde{H}_1\|^\dagger$ caused by small $\|\Delta_1\|_F$ may be avoided. Even though we do not have a proof of this statement, as we show in Section \ref{sec:analysis}, this is indeed the case for all the randomly-generated systems that we considered.

The condition $\delta_{\text{SN}}(\tilde{H}_1) > 0$ can be viewed as a proxy for signal to noise characteristics of the offline data (ergo the subscript $\mathrm{SN}$ where S stands for signal and N for noise). In particular, this condition will be satisfied if the noise level, $N$, is sufficiently small as compared to the underlying noise-free data. We show in Section \ref{sec:analysis} that the bound (i.e., the right hand side) in Theorem \ref{thm:no tsvd} is small if $\delta_{\text{SN}}(\tilde{H}_1)$ is sufficiently large and the rank condition $k = r$ is satisfied.

We reconsider the situation where $N$ is sufficiently small, this time for the case of $k=r$. As before, the offline data dominates the perturbation, but now   \eqref{eq:bound_1} can be  simplified as follows:
\begin{align}
\label{eq:Linear_eq_1}
& \|\mathbf{\tilde{y}}_{f}-\mathbf{y}_{f}\|_2 \leq N\left(\|\tilde{Y}_f\| \sqrt{2}\sigma_{\text{sq}}(\tilde{H}_1)\sqrt{pT_{p}M} \right. \nonumber\\
& \left. + \|\tilde{h}\|_2 + \|\tilde{Y}_f\|_F\|\tilde{H}_1^{\dagger}\|_F\sqrt{T_{p}}+\sqrt{pT_{f}M} \right. \left. + \|\tilde{H}_1^{\dagger}\tilde{h}\|_F\right)
\end{align}
that is, the bound is linear in $N$. This implies the noise level's strong and direct influence on the presented bound. We will further examine this observation in the numerical section. It is important to mention that this linear behavior requires the rank condition $k = r$ to be satisfied.
    
\begin{remark}
    Theorem \ref{thm:no tsvd} utilizes the $r$-th singular value of $\tilde{H}_1$, which necessitates knowledge of the system order $n$, see Eq.~\eqref{eq:r}. In instances where $n$ is not available, an upper bound on $r$ could be employed as a substitute. The same argument can be made regarding the lag $\ell$, since $\ell \leq n$ for any system.
\end{remark}
\subsection{Upper bound on $\|\mathbf{\hat{y}_{pred}}-\mathbf{y_{pred}}\|_2$}
Next, we will find an upper bound on the prediction error, $\|\mathbf{\hat{y}_{pred}}-\mathbf{y_{pred}}\|_2$, i.e., when the output prediction is calculated using the rank-$r$ TSVD of the Hankel matrix.
\begin{theorem} \label{thm:tsvd}
Suppose $\delta_{\text{SN}}(\hat{H}_1)>0$. Then:
\begin{align}
\label{eq:bound_2}
    & \|\mathbf{\hat{y}_{pred}} - \mathbf{y_{pred}}\|_2 \leq \sqrt{2}\left(\|\tilde{Y}_f\|_F + \sqrt{pT_fM}N\right) \nonumber \\
    & \quad \left(\frac{1}{\delta_{\text{SN}}(\hat{H}_1)}\right)^2 \left(\|\hat{H}_1 - \tilde{H}_1\|_F +\sqrt{pT_pM}N\right) \nonumber\\
    & \quad\left(\|\tilde{h}\|_2+\sqrt{pT_p}N\right) + \|\hat{H}_1^\dagger\|_F(\|\tilde{h}\|_2+\sqrt{pT_p}N) \nonumber\\ 
    & \quad \left(\|\hat{Y}_f-\tilde{Y}_f\|_F+\sqrt{pT_fM}N\right) +\|\hat{Y}_f\hat{H}_1^\dagger\|_F\sqrt{pT_p}N 
\end{align}
\end{theorem}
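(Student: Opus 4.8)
\textbf{Proof proposal for Theorem \ref{thm:tsvd}.}

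The plan is to decompose the error $\mathbf{\hat{y}_{pred}} - \mathbf{y_{pred}} = \hat{Y}_f \hat{H}_1^\dagger \tilde{h} - Y_f H_1^\dagger h$ using a telescoping (add–subtract) argument, exactly mirroring the structure already used for Theorem \ref{thm:no tsvd}, but now with the TSVD-derived blocks $\hat{Y}_f$, $\hat{H}_1$ playing the role that the raw noisy blocks $\tilde{Y}_f$, $\tilde{H}_1$ played there. Concretely, I would first replace $Y_f H_1^\dagger h$ by $\tilde{Y}_f \tilde{H}_1^\dagger \tilde{h}$ only conceptually; more directly, I would write $\hat{Y}_f \hat{H}_1^\dagger \tilde{h} - \tilde{Y}_f \tilde{H}_1^\dagger \tilde{h}$ and then relate $\tilde{Y}_f \tilde{H}_1^\dagger \tilde{h}$ back to $\mathbf{y_{pred}} = Y_f H_1^\dagger h$. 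Since $\mathbf{y_{pred}} = Y_f H_1^\dagger h$ is the \emph{exact} noise-free prediction and, under the rank condition \eqref{eq:rank cond}, $H_1 g^* $ reproduces $h$ exactly (Lemma \ref{lemm:H_1 rank} guarantees $\mathrm{rank}(H_1) = r$), the clean prediction equals $Y_f g^*$ for the true trajectory; I would exploit the identity that $H_1 H_1^\dagger h = h$ for a consistent system so that the ``$h$ vs.\ $\tilde h$'' discrepancy is controlled purely through $\|\delta\|_2$. The key add–subtract chain is
\begin{align*}
\hat{Y}_f \hat{H}_1^\dagger \tilde{h} - Y_f H_1^\dagger h
&= \underbrace{(\hat{Y}_f - \tilde Y_f)\hat{H}_1^\dagger \tilde h}_{\text{(a)}}
 + \underbrace{\tilde Y_f(\hat{H}_1^\dagger - \tilde H_1^\dagger)\tilde h}_{\text{(b)}}
 + \underbrace{\tilde Y_f \tilde H_1^\dagger \tilde h - Y_f H_1^\dagger h}_{\text{(c)}},
\end{align*}
and term (c) is exactly the quantity bounded in Theorem \ref{thm:no tsvd}. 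However, to obtain the cleaner form of \eqref{eq:bound_2} I would instead pivot the whole telescope around $\hat H_1$ and $\tilde h$ directly, using that $\mathbf{y_{pred}}$ can be written using the \emph{clean} data but its difference from $\hat Y_f \hat H_1^\dagger \tilde h$ is split into a ``$Y_f$ vs.\ $\hat Y_f$'' part, an ``$H_1^\dagger$ vs.\ $\hat H_1^\dagger$'' part, and an ``$h$ vs.\ $\tilde h$'' part, each handled by submultiplicativity of the Frobenius norm.

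The technical heart is bounding the pseudoinverse perturbation $\|\hat{H}_1^\dagger - \tilde H_1^\dagger\|$ (or $\|\hat H_1^\dagger - H_1^\dagger\|$). Here I would invoke the standard perturbation theorem for pseudoinverses (Wedin/Stewart type), which gives $\|\hat H_1^\dagger - H_1^\dagger\| \le \sqrt{2}\,\|\hat H_1^\dagger\|\,\|H_1^\dagger\|\,\|\hat H_1 - H_1\|$ when the ranks agree — and crucially, because $\hat H_1$ is the rank-$r$ TSVD block, $\mathrm{rank}(\hat H_1) \le r = \mathrm{rank}(H_1)$, so the acute-perturbation hypothesis is automatically met (this is precisely the advantage of TSVD that the paragraph after Theorem \ref{thm:no tsvd} advertises: no ``$k=r$'' caveat is needed). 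The factor $\sqrt{2}$ and the squared reciprocal singular value $\left(1/\delta_{\mathrm{SN}}(\hat H_1)\right)^2$ appearing in \eqref{eq:bound_2} come from bounding $\|\hat H_1^\dagger\|\,\|H_1^\dagger\|$ (or its analogue with $\tilde H_1$) by $1/\sigma_{\min}$ quantities — and then replacing $\sigma_{\min}$, which is not directly computable from the true data, by the computable lower bound $\delta_{\mathrm{SN}}(\hat H_1) = \sigma_r(\hat H_1) - \sqrt{pT_pM}\,N$, which is a valid lower bound on $\sigma_{\min}(H_1)$ via Weyl's inequality applied to the perturbation $\hat H_1 - H_1$ together with Lemma \ref{lemm:Lemma_1}'s bound $\|\Delta_1\|_F \le \sqrt{pT_pM}N$ (one also needs $\|\hat H_1 - \tilde H_1\|_F$ controlled, and since TSVD is the best rank-$r$ approximation of the noisy Hankel matrix, $\|\hat{\mathcal H} - \tilde{\mathcal H}\|_F \le \|\mathcal H - \tilde{\mathcal H}\|_F$, hence $\|\hat H_1 - \tilde H_1\|_F$ is controlled by the noise-induced perturbation — this is why $\|\hat H_1 - \tilde H_1\|_F$ appears explicitly in \eqref{eq:bound_2} rather than being further bounded).

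After the pseudoinverse step, the rest is bookkeeping: apply Lemma \ref{lemm:Lemma_1} to replace $\|\delta\|_2$, $\|\Delta_1\|_F$, $\|\Delta_2\|_F$ by $\sqrt{pT_p}N$, $\sqrt{pT_pM}N$, $\sqrt{pT_fM}N$ respectively; bound $\|\tilde Y_f\|_F = \|Y_f + \Delta_2\|_F$ and likewise keep $\|\hat Y_f - \tilde Y_f\|_F$, $\|\tilde h\|_2$, $\|\hat H_1^\dagger\|_F$, $\|\hat Y_f \hat H_1^\dagger\|_F$ as the computable quantities they are; and collect terms to match the four-line right-hand side of \eqref{eq:bound_2}. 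I expect the main obstacle to be getting the pseudoinverse perturbation bound in exactly the right form with the $\delta_{\mathrm{SN}}(\hat H_1)$ denominator — in particular, justifying carefully that the relevant smallest nonzero singular value in the Wedin bound can be uniformly lower-bounded by $\delta_{\mathrm{SN}}(\hat H_1) > 0$ (the standing hypothesis of the theorem), and handling the subtlety that one pseudoinverse ($\hat H_1^\dagger$) is of a rank-$\le r$ matrix while the reference ($H_1^\dagger$, via $\mathbf{y_{pred}}$) has rank exactly $r$, so that the ranks genuinely coincide and the perturbation is ``acute.'' The triangle-inequality/submultiplicativity bookkeeping is routine once that step is in place.
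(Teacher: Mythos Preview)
Your eventual approach --- the direct three-term split into the ``$\hat Y_f$ vs.\ $Y_f$'', ``$\hat H_1^\dagger$ vs.\ $H_1^\dagger$'', and ``$\tilde h$ vs.\ $h$'' pieces, then Stewart's pseudoinverse perturbation bound, the triangle inequalities $\|\hat H_1 - H_1\|_F \le \|\hat H_1 - \tilde H_1\|_F + \|\Delta_1\|_F$ and $\|\hat Y_f - Y_f\|_F \le \|\hat Y_f - \tilde Y_f\|_F + \|\Delta_2\|_F$, and finally Lemma~\ref{lemm:Lemma_1} --- is exactly the paper's proof; your initial (a)--(b)--(c) telescope through $\tilde Y_f\tilde H_1^\dagger\tilde h$ is an unnecessary detour. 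The paper additionally observes that the hypothesis $\delta_{\mathrm{SN}}(\hat H_1)>0$ forces $\sigma_r(\hat H_1)>0$, so $\hat H_1$ (being a row-block of the rank-$r$ matrix $\hat{\mathcal H}$) has rank \emph{exactly} $r$, resolving your stated obstacle about rank agreement and collapsing $\sigma_{\mathrm{sq}}(\hat H_1)$ directly to $(1/\delta_{\mathrm{SN}}(\hat H_1))^2$.
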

\begin{proof}
See the Appendix.
\end{proof}
Similar to Theorem \ref{thm:no tsvd}, this theorem also relies solely on known elements, i.e., offline noisy data, system order $n$, and noise level $N$, and is applicable to any data that satisfies the $\delta_{\text{SN}}(\hat{H}_1)>0$ condition. Like the previous theorem $\delta_{\text{SN}}(\hat{H}_1)$ captures the noise characteristics of the dataset and as demonstrated in Section IV, the bound is small when $\delta_{\text{SN}}(\hat{H}_1)$ is sufficiently large. Furthermore, this bound does not exhibit the unbounded growth discussed in Eq. \eqref{eq:big_sigma}, so it is more broadly applicable.

Following the same logic as before, if $N$ is sufficiently small, we can simplify the right-hand side of \eqref{eq:bound_2} as follows. Firstly, the terms $\|\tilde{Y}_f\|_F$ and $\|\tilde{h}\|_2$ are non-zero and dominate $\|\Delta_2\|_F$ and $\|\delta\|_2$. Secondly, 
using \cite[Theorem 3]{Vu_2021}, we have that: 
$$
    \|\hat{H}_1 - \tilde{H}_1\|_F 
     \leq \|\Delta_1\|_F\left(2(1+\sqrt{2})\min\left\{\frac{2\|\Delta_1\|_F}{\sigma_r(H_1)},1\right\}+1\right)
$$
and, from Lemma \ref{lemm:Lemma_1}, we know that $\|\Delta_1\|_F \leq \sqrt{pT_pM}N$. Combining these facts, the right-hand side of \eqref{eq:bound_2} becomes:  
\begin{align}
\label{eq:linear_eq_2}
 & \|\mathbf{\hat{y}_{pred}} - \mathbf{y_{pred}}\|_2 \leq N\Biggl(\sqrt{2}\|\tilde{Y}_f\|_F\left(\frac{1}{\delta_{\text{SN}}(\hat{H}_1)}\right)^2\sqrt{pT_pM}\nonumber \\ 
 & \quad \left(2(1+\sqrt{2})\min\left\{\frac{2\|\Delta_1\|_F}{\sigma_r(H_1)},1\right\}+2\right)\|\tilde{h}\|_2 +\|\hat{H}_1^\dagger\|_F\|\tilde{h}\|_2 \nonumber\\
 & \quad \sqrt{T_fM}\left(2(1+\sqrt{2})\min\left\{\frac{2\|\Delta_2\|_F}{\sigma_r(Y_f)},1\right\}+2\right) \nonumber\\
 & \quad +\|\hat{Y}_f\hat{H}_1^\dagger\|_F\sqrt{T_p}\Bigg) 
\end{align}

The right-hand side of this new inequality is linear in $N$ which shows the clear impact of noise level on the presented upper bound. We will explore this further in the next section.

It is worth mentioning that, the observations made on the tightness of this bound with respect to the values of $\delta_{\text{SN}}(\hat{H}_1)$ and $N$ hold true without imposing the restrictive rank conditions required for Theorem \ref{thm:no tsvd}, namely $r< k$.

\begin{remark}
\label{re:EIV_bound}
    The upper bounds presented in Theorem \ref{thm:no tsvd} and Theorem~\ref{thm:tsvd} can be easily extended to the errors-in-variables (EIV) setting, i.e., when both the inputs and outputs are corrupted by noise. In this setting, if we model the input noise similarly to the output noise \eqref{eq:noise_model_2}, \eqref{eq:noise_model_1}, where the noise on each input is also bounded in magnitude by $N$, we obtain the following modifications. For Theorem \ref{thm:no tsvd}, we can show that if $\sigma_r(\tilde{H}_1)-\sqrt{(pT_p+mT)M}N >0$, then we have:
    \begin{align}
     & \|\mathbf{\tilde{y}_{pred}} -\mathbf{y_{pred}} \|_2 \leq  \sqrt{2}\left( \|\tilde{h}\|_2+\sqrt{pT_{p}+mT}N\right) \nonumber\\
     & \quad \max\left\{\left(\frac{1}{\sigma_r(\tilde{H}_1)-\sqrt{(pT_p+mT)M}N}\right)^2, \left(\frac{1}{\sigma_{min}(\tilde{H}_1)}\right)^2\right\} \nonumber \\
     & \quad \times \left( \|\tilde{Y}_f\|_F+\sqrt{pT_{f}M}N\right) \sqrt{(pT_p+mT)M}N  \nonumber\\
     &\quad + \|\tilde{H}_1^{\dagger}\|_F \sqrt{pT_{p} +mT}N \left( \|\tilde{Y}_f\|_F+\sqrt{pT_{f}M}N\right) \nonumber \\ &\quad +\sqrt{pT_{f}M}N\|\tilde{H}_1^{\dagger}\tilde{h}\|_F \nonumber
\end{align}
For Theorem~\ref{thm:tsvd}, we can show that if $\sigma_r(\hat{H}_1)-\sqrt{(pT_p+mT)M}N >0$, then we have that:
    \begin{align}
    & \|\mathbf{\hat{y}_{pred}} - \mathbf{y_{pred}}\|_2 \leq \sqrt{2}\left(\|\tilde{Y}_f\|_F + \sqrt{pT_fM}N\right) \nonumber \\
    & \left(\frac{1}{\sigma_r(\hat{H}_1)-\sqrt{(pT_p+mT)M}N}\right)^2 \left(\|\tilde{h}\|_2+\sqrt{pT_p+mT}N\right)\nonumber\\ 
    & \left(\|\hat{H}_1 - \tilde{H}_1\|_F +\sqrt{(pT_p+mT)M}N\right) + \|\hat{H}_1^\dagger\|_F\nonumber\\
    & (\|\tilde{h}\|_2+\sqrt{pT_p+mT}N) \left(\|\hat{Y}_f-\tilde{Y}_f\|_F+\sqrt{pT_fM}N\right)  \nonumber\\ 
    & +\|\hat{Y}_f\hat{H}_1^\dagger\|_F\sqrt{pT_p+mT}N \nonumber
    \end{align}
\end{remark}
\begin{remark}
     If $\sigma_{r}(H_1)$ is known, we can replace $\sigma_{\text{sq}}(\tilde{H}_1)$ with $\max\left\{\left(\frac{1}{\sigma_{\text{r}}(H_1)}\right)^2, \left(\frac{1}{\sigma_{\text{min}}(\tilde{H}_1)}\right)^2\right\}$ and $\left(\frac{1}{\delta_{\text{SN}}(\hat{H}_1)}\right)^2$ with $\max\left\{\left(\frac{1}{\sigma_{\text{r}}(H_1)}\right)^2, \left(\frac{1}{\sigma_{\text{min}}(\hat{H}_1)}\right)^2\right\}$ in Theorem \ref{thm:no tsvd} and Theorem \ref{thm:tsvd}, respectively. This results in smaller bounds in \eqref{eq:bound_1} and \eqref{eq:bound_2}, and eliminates the requirement for a positive $\delta_{SN}$. However, knowing $\sigma_{r}(H_1)$ requires knowledge on the noise-free offline data so it may not be practical if such information is unavailable.
 \end{remark}
 
\section{Analysis of the bounds}\label{sec:analysis}

In this section, we provide numerical illustrations of the tightness (i.e., smallness) of the upper bounds established in the previous section and the effects of the TSVD method on prediction accuracy. We explore 
different noise scenarios and pinpoint the conditions that result in a small bound. For this analysis, we introduce the notion of the ``relative gap", which measures the percentage difference between the left-hand side and the right-hand side of \eqref{eq:bound_1} and \eqref{eq:enhanced_bound} normalized by the output magnitude:
$$\text{Relative Gap} = \frac{\text{Right-hand side} - \text{Left-hand side}}{\|\mathbf{y_{pred}}\|_2} \times100$$
 A small relative gap means that the bounds are relatively tight, and thus the bounds can be viewed as reasonable approximations of the true prediction error. On the other hand, a large relative gap means that the bounds are too loose to be useful in practice.

We use Monte Carlo experiments of randomly-generated first- and second-order systems, where  we evaluate the effectiveness of the TSVD method on prediction accuracy, the general performance and applicability of the upper bounds, the effects of noise on their tightness, and the conditions where these effects are small.

Each Monte Carlo study comprises of 1,000 randomly-generated {stable} systems. While the process for generating most of the parameters for these systems is identical (as described below) the experiments differ in the way conditions on $T_p$ and $\delta_{\text{SN}}$ are enforced. 

{\bf Random systems:} The random systems are $n$-th order stable discrete-time systems with $p$ outputs and $m$ inputs and are generated using the \texttt{drss} command in MATLAB R2020a. Parameters $n$, $p$, $m$, and $T_f$ are all drawn from discrete uniform distributions, where $n \in \{1,2\}$, $T_f \in \{1,2,3\}$, and $p,m$ are chosen between $1$ and $n$. 
The horizon $T_p$ will be discussed later. 
For each random system, we simulate the system's response to an input signal with elements uniformly distributed between $-1$ and $1$. Using this approach, we collect 100 timesteps of input and output offline data ($L=100$). 
Since the offline data is randomly generated, it is guaranteed to satisfy condition \eqref{eq:rank cond}. As for the online data, the elements of $\mathbf{u_{ini}}$ and $\mathbf{u_{pred}}$, as well as the latent initial state $x_{ini}$, are generated randomly from the uniform distribution between $-1$ and $1$. We then use $\mathbf{u_{ini}}$ and ${x_{ini}}$ to calculate $\mathbf{y_{ini}}$. As for the output noise, we use 50 logarithmically-spaced points between decades $10^{-8}$ and $10^{-3}$ as our noise levels. For each noise level, we corrupt the output data with 100 different random noise realizations. Each noise element will be a uniformly-distributed random number in the interval $(-N,N)$. In total, we have 1,000 different random systems with 5,000 different noise scenarios for each system.

{\bf TSVD prediction accuracy:} We first study the effects of the TSVD method itself on the prediction accuracy. To this end, we generate 1,000 systems as described above with $T_p$ chosen randomly between 1 and 3 for each system. There are no restrictions on $\delta_{\text{SN}}$ for this study. We record $\|\mathbf{\tilde{y}_{pred}}-\mathbf{y_{pred}}\|_2$ and $\|\mathbf{\hat{y}_{pred}}-\mathbf{y_{pred}}\|_2$ (i.e., the left hand sides in \eqref{eq:bound_1} and \eqref{eq:bound_2}) in each scenario, and plot them against each other. The results are presented in Fig. \ref{fig:lhs1_vs_lhs2}, with a 45-degree line for better visual comparison. A logarithmic scale is used for both axes.
\begin{figure}
    \centering
    \includegraphics[width=0.9\columnwidth]{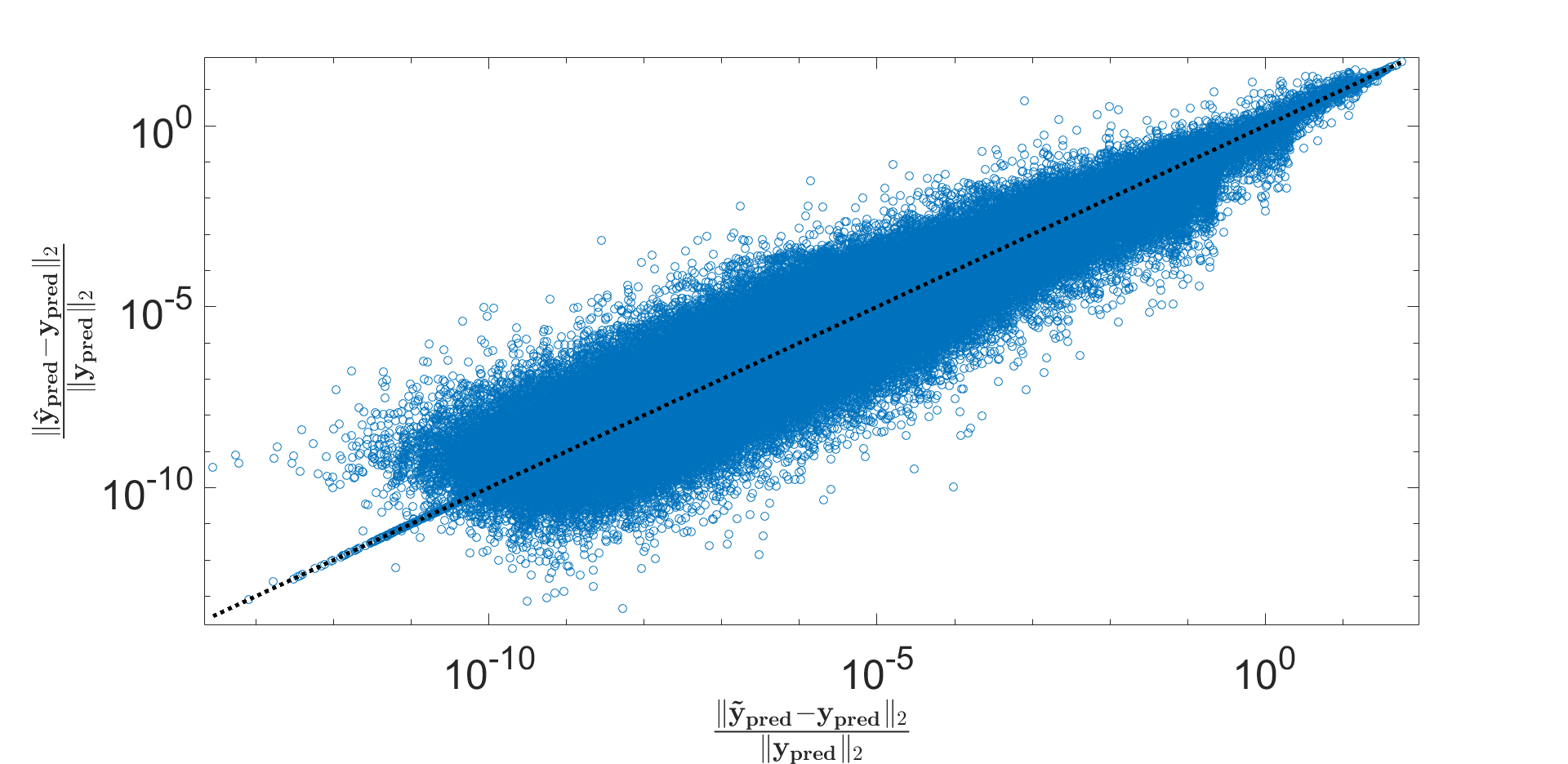}
   \caption{Comparison of { normalized} prediction errors using raw offline data and low-rank approximation of offline data}
   \label{fig:lhs1_vs_lhs2}
\end{figure}
The study reveals that the normalized prediction error remained low for most of the analyzed scenarios, especially when $\delta_{SN}$ had a sufficiently large value. While the TSVD method improved predictions in 52\% of the scenarios, there were instances where it led to poorer predictions. Therefore, we conclude that it cannot be used generically to mitigate noise in OE settings. 
\begin{remark}
    We acknowledge that the application of unstructured rank-r approximation \eqref{eq:ypredtsvd} may not be appropriate for the OE setup \eqref{eq:noise_model_1} because this approach inherently assumes that all elements of the Hankel matrix (i.e., both the inputs and outputs) are perturbed, which is not the case in our scenario. For the OE setup, the Generalized Low Rank Approximation (GLRA) \cite{GLRA} can be considered as a more suitable de-noising heuristic. This is a topic for future research.
\end{remark}
{\bf Comparison of the upper bounds:} Next, we generate 1,000 random systems as described above, but choose $T_p$ such that $\mathrm{rank}(\tilde{H}_1) =r$. Although this condition is only required for the first bound, we enforce it for the second bound as well to allow for a fair comparison between the two bounds. Note that, for the condition $\mathrm{rank}(\tilde{H}_1) =r$ to hold, we must have have that
$$
T_p = \frac{n}{p}.
$$
To show this, notice that $\tilde{H}_1$ has $mT+pT_p$ rows. We know that $\tilde{Y}_p$ is almost surely full row rank due to the measurement noise, i.e., $\mathrm{rank}(\tilde{Y}_p) = pT_p$. Thus, if $\mathcal{H}_u$ is also full row rank, which is the case for our simulations due to the offline input being randomly generated, then $\tilde{H}_1$ is almost surely full row rank as well, which implies that it must have $r$ rows. Therefore, we have that $r = mT+pT_p$, which coupled with condition \eqref{eq:rank cond}, implies that $T_p = \frac{n}{p}$.
As discussed previously, by setting $T_p = \frac{n}{p}$, we may prevent the unbounded growth described in \eqref{eq:big_sigma}. In addition to ensuring that $T_p = \frac{n}{p}$, 
we ensure that $\delta_{\text{SN}} > 0$ for both $\tilde{H}_1$ and $\hat{H}_1$ as required by Theorems \ref{thm:no tsvd} and \ref{thm:tsvd}. 

The recorded relative gaps are plotted against each other in Fig. \ref{fig:relgap1_vs_relgap2}, with a 45-degree line for better visual comparison. Additionally, the {\it average} recorded upper bound values at each noise level for both theorems are plotted against the noise level $N$ in Fig. \ref{fig:rhs1_vs_rhs2} to illustrate the impact of this value on our upper bounds. A logarithmic scale is used for both axes in both figures.
\begin{figure}
    \centering
    \includegraphics[width=\columnwidth]{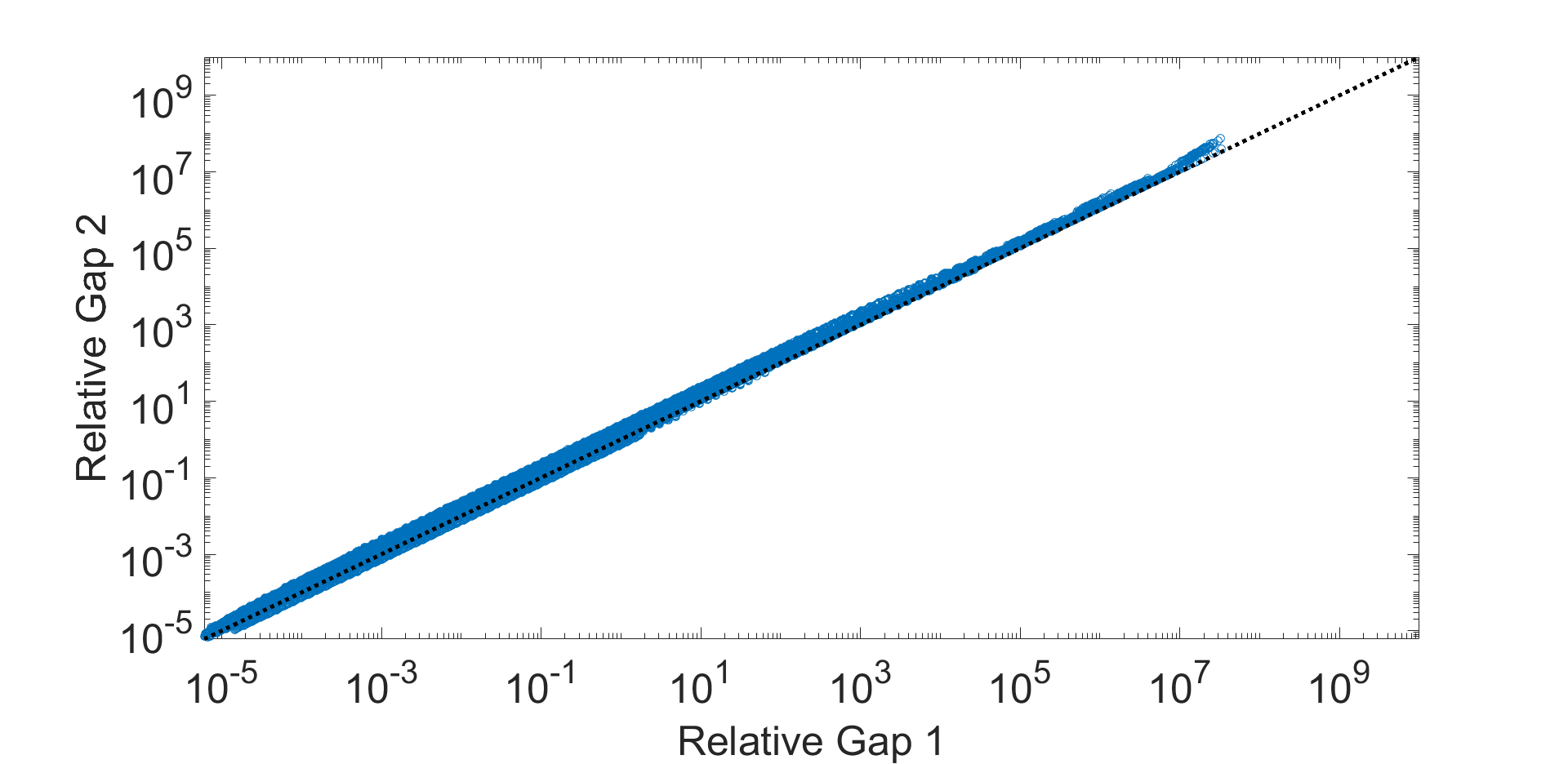}
\caption{Comparison of the relative gaps obtained from Theorem~1 and Theorem~2.}
   \label{fig:relgap1_vs_relgap2}
\end{figure}
\begin{figure}
    \centering
    \includegraphics[width=\columnwidth]{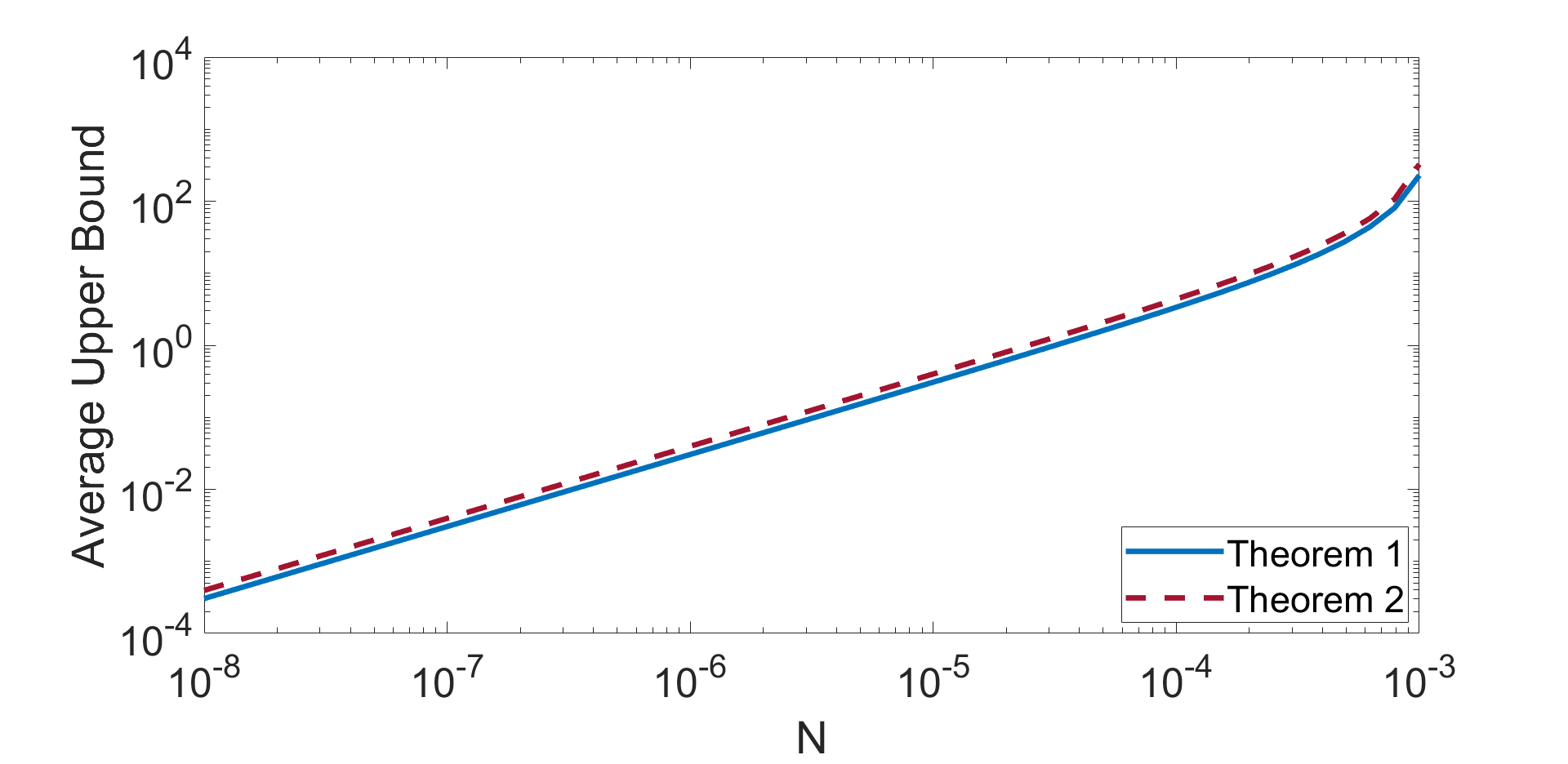}
   \caption{Average upper bound values at each noise level plotted against the noise level $N$.}
   \label{fig:rhs1_vs_rhs2}
\end{figure}
The results illustrated in Fig. \ref{fig:relgap1_vs_relgap2} show that the first upper bound outperformed the second one in most simulations. However, this advantage depended on whether or not $T_p=n/p$, as the relative gap would have had large values for the first bound if this condition was not met (since the considered noise levels are relatively small). Regardless, there were cases where both upper bounds demonstrated large relative gaps, which occurred when $N$ is large. Further, as depicted in Fig. \ref{fig:rhs1_vs_rhs2}, the plots for both upper bounds exhibit a slope of approximately 1, indicating an almost linear relationship between the noise level $N$ and the bounds, which is an observation that is consistent with those made based on inequalities (15) and (17). 

{\bf Enforcing a lower bound on $\delta_{\text{SN}}$:} One of the major reasons for the large gaps mentioned above was the value of $\delta_{\text{SN}}$ being small and close to zero, which resulted in large values for both upper bounds.  Therefore, we repeated the simulations, this time enforcing that $\delta_{\text{SN}}$ was above 0.6 for $\tilde{H}_1$ and $\hat{H}_1$ for the first and second bound, respectively. Additionally, now that we are studying the bounds individually and not against each other, we no longer enforced the $T_p=n/p$ condition for the second bound. In these simulations, we recorded the highest relative gaps (worst-case) in each noise level and plotted a box plot (Fig. \ref{fig:B_box}) to visualize the performance of the bounds for all systems in the worst-case scenarios. We use a logarithmic scale for both axes to provide clearer visualization.
\begin{figure}
   \centering
   \begin{minipage}{\columnwidth}
    \centering
    \includegraphics[width=\columnwidth]{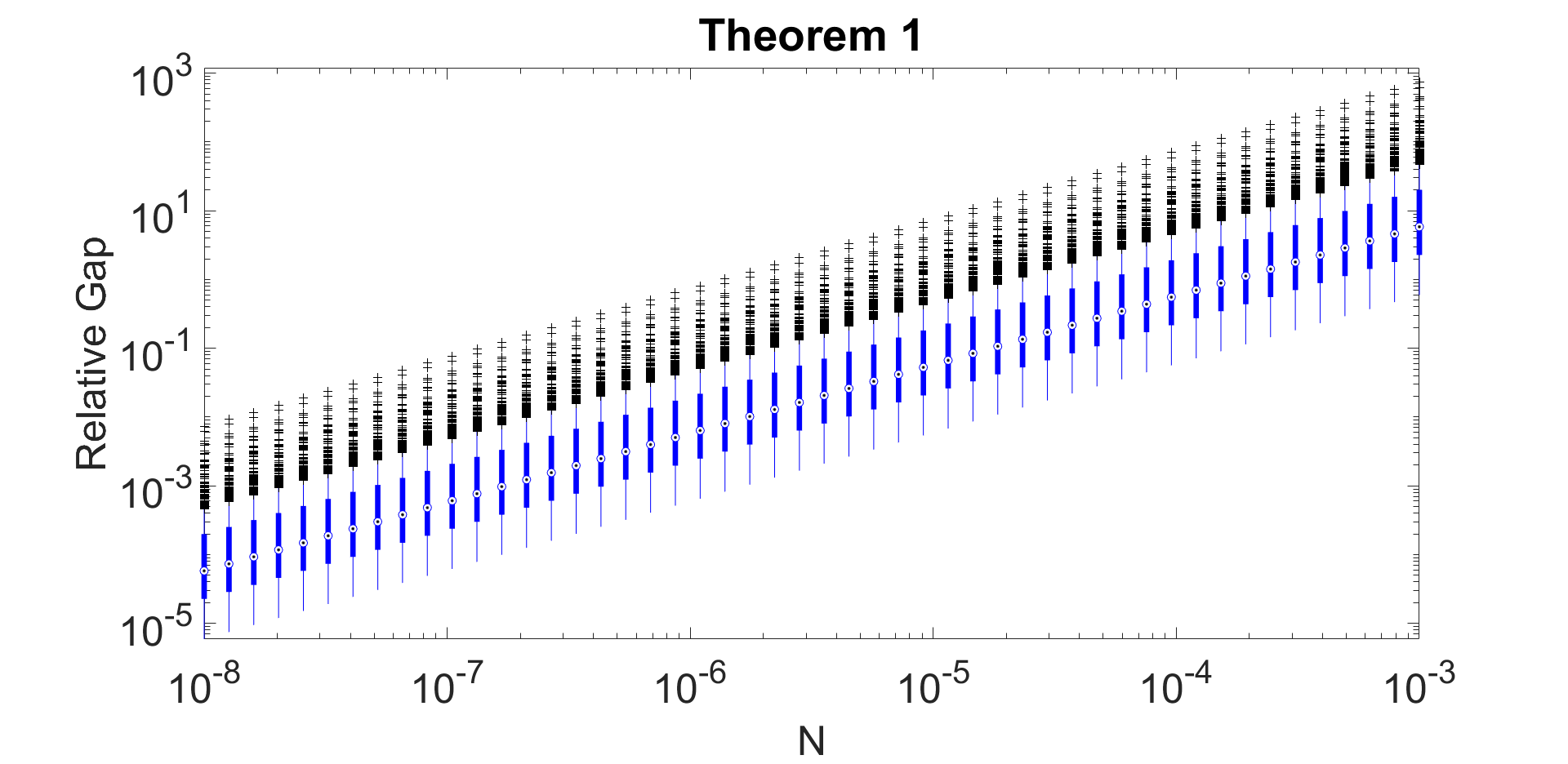}
   \end{minipage}
   \begin{minipage}{\columnwidth}
    \centering
    \includegraphics[width=\columnwidth]{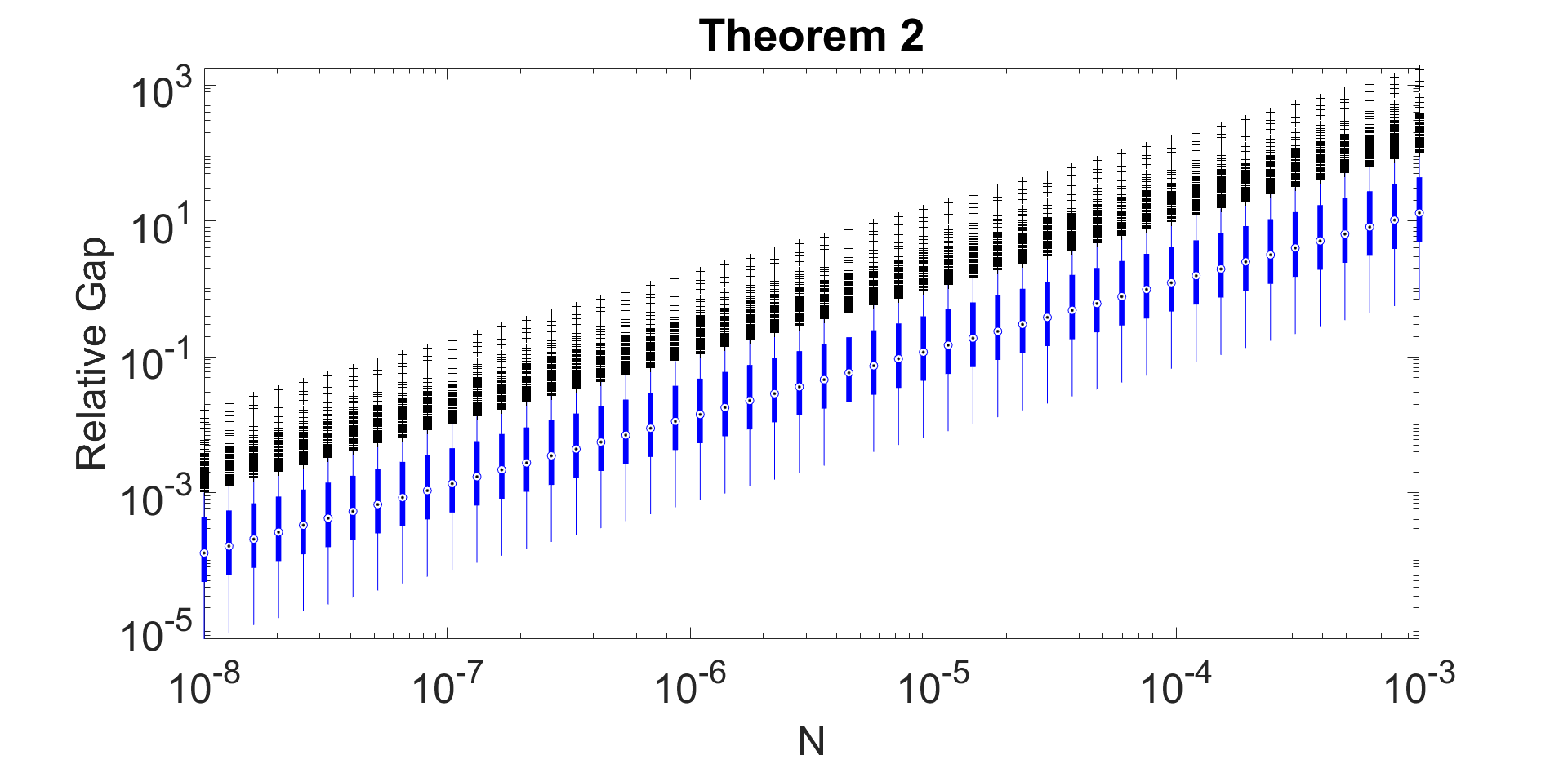}
   \end{minipage}
   \caption{Box plots illustrating the median, 25th, and 75th percentiles (box edges) of relative gap values for both theorems under the influence of the $\delta_{\text{SN}}>0.6$ condition. Whiskers extend to the most extreme non-outlier data points, and outliers are plotted individually with `+' symbols. The top and bottom subplots depict the relative gap values for the first and second upper bounds, respectively.}
   \label{fig:B_box}
\end{figure}
The plots reveal the high accuracy of both upper bounds under the specified $\delta_{\text{SN}}$ condition, with median and average values remaining below 10\% and 25\%, for the first theorem and below 15\% and 50\% for the second one, across all noise levels. It is clear that imposing a minimum $\delta_{\text{SN}}$ threshold across the datasets significantly enhances the tightness of these bounds. Even in instances where large relative gaps were observed, they remained notably smaller compared to those occurring without the $\delta_{\text{SN}}$ condition. Overall, the first theorem performed better in the presence of the $\delta_{\text{SN}}$ condition, while the second theorem had the advantage of not requiring any restrictions on the value of $T_p$.

\section{Conclusions and Future Work}\label{sec:conclusion}

This paper introduced two novel upper bounds on output prediction error, one for direct prediction from raw offline data and another for prediction using the low-rank approximation of the offline data. These upper bounds leverage offline system data and a known noise level denoted as $N$ to bound the error. The results are validated with a series of  Monte Carlo simulations. We demonstrated the effectiveness and precision of our bounds and identified the optimal setting for a close bound. The first bound offers slightly higher accuracy, while the second bound is more versatile and does not impose constraints on the construction of the Hankel matrix. Further, We examined the TSVD method's applicability in mitigating noise effects in output prediction, highlighting its limitations due to inconsistent performance in the OE setup. Future work will study how the theorems can enhance the data-driven methods based on the behavioral approach to systems theory and improve their performance in data-driven control and simulation when dealing with inexact offline data. Finally, we will investigate the application of GLRA in the OE setup.
\section{Appendix}
\subsection{Proof of Lemma \ref{lemm:Lemma_1}}
The matrices $\delta$, $\Delta_1$, and $\Delta_2$ have $pT_p$, $pT_pM$, and $pT_fM$ entries, respectively. By leveraging the definition of the 2-norm and the Frobenius-norm, along with the fact that each element of these matrices is magnitude-bounded by the noise level $N$, we obtain the results.

\subsection{Proof of Lemma \ref{lemm:H_1 rank}}
Suppose the underlying state-space model of the dynamics is given by:
$$
x(t+1) = Ax(t)+Bu(t),
$$
$$
y(t)=Cx(t)+Du(t).
$$
Define the vector of latent states, $X$, and the extended observability and convolution matrices $\mathcal{O}_t$ and $\mathcal{T}_t$ as follows:
$$X=\left[ \begin{array}{cccc}
     x(T_p) & x(T_p+1) & \dots & x(L-T_f)
\end{array} \right]$$
$$\mathcal{O}_{t}=\left[ \begin{array}{cccc}
     C^{\top} & (CA)^{\top} & \dots & (CA^{t-1})^{\top}
\end{array} \right]^{\top}$$
\[
\mathcal{T}_{t} = \begin{bmatrix}
D & \mathbf{0} & \dots & \mathbf{0}  \\
CB & D & \dots & \mathbf{0}  \\
\vdots & \ddots & \ddots & \vdots \\
CA^{t-2}B & CA^{t-3}B & \dots & D
\end{bmatrix}
\]

Using the state-space model, $Y_f$ can be expressed as a linear combination of $X$ and $U_f$:
\begin{align}
\label{eq:Y_f_first_eq}
    Y_f = \mathcal{O}_{T_f}X+\mathcal{T}_{T_f}U_f
\end{align}
and $X$ can be expressed as a linear combination of $U_p$ and $Y_p$:
$$X = A^{T_{p}}\mathcal{O}_{T_p-1}^\dagger(Y_p-\mathcal{T}_{T_p-1}U_p)+\mathcal{C}_{T_p-1}U_p$$
where, since $T_p \geq \ell$, $X$ is unique, and
\[
\mathcal{C}_{T_p-1} = \begin{bmatrix}
A^{T_p-1}B & A^{T_p-2}B & \dots & AB & B \\
\end{bmatrix}
\]
Substituting this expression for $X$ into \eqref{eq:Y_f_first_eq} we conclude that 
$Y_f$ can be expressed as a linear combination of $U_p$, $U_f$, and $Y_p$ and therefore removing $Y_f$ from $\mathcal{H}$ does not alter its rank.
\subsection{Proof of Theorem \ref{thm:no tsvd}}
We use the following notation for this proof:
$$
\delta_h = \tilde{h}-h 
$$
 An initial bound is obtained on $\|\mathbf{\tilde{y}_{pred}} -\mathbf{y_{pred}} \|_2$ as follows:
 \begin{equation}\label{eq:th1_ini_bound}
\begin{aligned}
    & \|\mathbf{\tilde{y}_{pred}} -\mathbf{y_{pred}} \|_2 =\|\tilde{Y}_f\tilde{H}_1^{\dagger}\tilde{h}-Y_fH_1^{\dagger}h\|_2 \\
    & \quad = \|(Y_f+\Delta_2)\tilde{H}_1^{\dagger} (h+\delta_h)  -Y_fH_1^\dagger h \|_2 \\
    & \quad = \|Y_f(\tilde{H}_1^{\dagger}-H_1^\dagger)h+ Y_f\tilde{H}_1^{\dagger} \delta_h +\Delta_2\tilde{H}_1^{\dagger} \tilde{h} \|_2 \\
    & \quad \leq \|Y_f(\tilde{H}_1^{\dagger}-H_1^\dagger)h\|_2+ \|Y_f\tilde{H}_1^{\dagger} \delta_h\|_2 +\|\Delta_2\tilde{H}_1^{\dagger} \tilde{h} \|_2 \\
    & \quad \leq \|Y_f\|_F\|\tilde{H}_1^{\dagger}-H_1^\dagger\|_F\|h\|_2+ \|Y_f\|_F\|\tilde{H}_1^{\dagger}\|_F\|\delta\|_2 \\
    &\quad +\|\Delta_2\|_F\|\tilde{H}_1^{\dagger}\tilde{h}\|_F 
\end{aligned}
\end{equation}
where we have employed the triangle inequality and the fact that the 2-norm is less than the Frobenius norm. We bound $\|Y_f\|_F$ in the above expression as follows:
\begin{align}
    & \|Y_f-\tilde{Y}_f\|_F = \|\Delta_2\|_F \nonumber\\
    & \quad \Rightarrow \|Y_f\|_F-\|\tilde{Y}_f\|_F \leq \|\Delta_2\|_F \nonumber\\
    & \quad \Rightarrow \|Y_f\|_F \leq \|\tilde{Y}_f\|_F + \|\Delta_2\|_F \label{eq:Y_f_bound}
\end{align}
Similarly:
\begin{align}
    & \quad \|h\|_2 \leq\|\tilde{h}\|_2 +\|\delta\|_2\label{eq:h_bound}
\end{align}
Utilizing \cite[Theorem 3.3]{Stewart1977}, we bound $\|\tilde{H}_1^{\dagger}-H_1^\dagger\|_F$ as follows:
\begin{align}
    & \|\tilde{H}_1^{\dagger}-H_1^\dagger\|_F \leq \sqrt{2}\max\{\|H_1^\dagger\|_2^2, \|\tilde{H}_1^{\dagger}\|_2^2 \}\|\Delta_1\|_F
\label{eq:stewart_bound_1}
\end{align}
We proceed to show that  $\max\{\|H_1^\dagger\|_2^2, \|\tilde{H}_1^{\dagger}\|_2^2\}\leq \sigma_{\text{sq}}(\tilde{H}_1)$. To this end, we write:
\begin{align}
    & \max\{\|H_1^\dagger\|_2^2, \|\tilde{H}_1^{\dagger}\|_2^2\} = \max\{\sigma_{\max}(H_1^\dagger)^2, \sigma_{\max}(\tilde{H}_1^{\dagger})^2\} \nonumber \\
    & \quad = \max\left\{\left(\frac{1}{\sigma_{\min}(H_1)}\right)^2,\left(\frac{1}{\sigma_{\min}(\tilde{H}_{1})}\right)^2\right\} \nonumber
\end{align}
To bound $\left(\frac{1}{\sigma_{\min}(H_1)}\right)^2$, we note that $H_1$ is a rank-$r$ matrix based on lemma \ref{lemm:H_1 rank}, therefore $\sigma_{\min}(H_1) = \sigma_{r}(H_1)$. Now, using Weyl's inequality \cite[Proposition 1]{Vu_2021}, we obtain:
\begin{equation}\label{eq:Weyl_1}
\begin{aligned}
    & |\sigma_r(\tilde{H}_1) - \sigma_r(H_1) | \leq \|\Delta_1\|_2 \\
    &\quad \Rightarrow |\sigma_r(\tilde{H}_1)| - |\sigma_r(H_1)| \leq \|\Delta_1\|_F \\
    &\quad \Rightarrow \sigma_r(\tilde{H}_1) - \sqrt{pT_pM}N \leq \sigma_r(H_1) 
\end{aligned}
\end{equation}
where we have used Lemma \ref{lemm:Lemma_1} and the fact that the 2-norm is less than the Frobenius norm. 
Based on our assumption that $\delta_{\text{SN}}(\tilde{H}_1) = \sigma_{r}(\tilde{H}_1) - \sqrt{pT_pM}N$ is positive, we can write:
\begin{equation}
\label{eq:sigma_r_sq_bound}
     \left(\frac{1}{\delta_{\text{SN}}(\tilde{H}_1)}\right)^2 \geq \left(\frac{1}{\sigma_{r}(H_1)}\right)^2 = \left(\frac{1}{\sigma_{\text{min}}(H_1)}\right)^2 
\end{equation}
and consequently:
\begin{align}
    &\quad\max\left\{\left(\frac{1}{\sigma_{\min}(H_1)}\right)^2,\left(\frac{1}{\sigma_{\min}(\tilde{H}_{1})}\right)^2\right\} \nonumber\\
    &\quad \leq \max\left\{\left(\frac{1}{\delta_{\text{SN}}(\tilde{H}_1)}\right)^2,\left(\frac{1}{\sigma_{\min}(\tilde{H}_{1})}\right)^2\right\} = \sigma_{\text{sq}}(\tilde{H}_1)\nonumber
\end{align}
Hence:
\begin{align}
    & \|\tilde{H}_1^{\dagger}-H_1^\dagger\|_F \leq \sqrt{2}\sigma_{\text{sq}}(\tilde{H}_1) \|\Delta_1\|_F \label{eq:Stewart_bound_2}
\end{align}
Combining \eqref{eq:Y_f_bound}, \eqref{eq:h_bound}, and \eqref{eq:Stewart_bound_2} allows us to rewrite \eqref{eq:th1_ini_bound} as:
\begin{align}
    &  \|\mathbf{\tilde{y}_{pred}} -\mathbf{y_{pred}} \|_2 \leq \left(\|\tilde{Y}_f\|_F+\|\Delta_2\|_F\right)\sqrt{2}\sigma_{\text{sq}}(\tilde{H}_1)\|\Delta_1\|_F\nonumber\\
    &\quad \left(\|\tilde{h}\|_2+\|\delta\|_2\right)+ \left(\|\tilde{Y}_f\|_F+\|\Delta_2\|_F\right)\|\tilde{H}_1^{\dagger}\|_F\|\delta\|_2 \nonumber\\   
    &\quad +\|\Delta_2\|_F\|\|\tilde{H}_1^{\dagger}\tilde{h}\|_F \nonumber
\end{align}
Applying Lemma 1, the result follows. 

\subsection{Proof of Theorem \ref{thm:tsvd}}
Similarly to the previous proof:
\begin{align}
\label{eq:ini_bound_hat}
    & \|\mathbf{\hat{y}_{pred}} - \mathbf{y_{pred}}\|_2 = \|\hat{Y}_f\hat{H}_1^\dagger\tilde{h}-Y_fH_1^\dagger h\|_2 \nonumber\\
    &\quad \leq \|Y_f\|_F\|\hat{H}_1^\dagger - H_1^\dagger\|_F\|h\|_2  + \|\hat{Y}_f - Y_f\|_F\|\hat{H}_1^\dagger\|_F\|h\|_2 \nonumber\\
    &\quad  + \|\hat{Y}_f\hat{H}_1^\dagger\|_F\|\delta\|_2
\end{align}
Similar to  \eqref{eq:Stewart_bound_2}, we have that:
$ \|\hat{H}_1^\dagger - H_1^\dagger\|_F \leq \sqrt{2}\sigma_{\text{sq}}(\hat{H}_1)\|\hat{H}_1-H_1\|_F.$
Since $\hat{H}$ is a rank-$r$ matrix by construction, and because ${\delta_{\text{SN}}(\hat{H}_1)>0}$, we have that $\hat{H}_1$ is also rank-$r$. This implies that $\sigma_r(\hat{H}_1) = \sigma_{\min}(\hat{H}_1)$, so we simplify $\sigma_{\text{sq}}(\hat{H}_1)$:
\begin{align}
    & \sigma_{\text{sq}}(\hat{H}_1) =  \max\left\{ \left(\frac{1}{\delta_{\text{SN}}(\hat{H}_1)}\right)^2, \left(\frac{1}{\sigma_{\text{min}}(\hat{H}_1)}\right)^2\right\} \nonumber\\
    & \quad = \max\left\{\left(\frac{1}{\delta_{\text{SN}}(\hat{H}_1)}\right)^2, \left(\frac{1}{\sigma_{{r}}(\hat{H}_1)}\right)^2\right\} =\left(\frac{1}{\delta_{\text{SN}}(\hat{H}_1)}\right)^2 \nonumber
\end{align}
Next, we calculate a bound on $\|\hat{H}_1 -H_1\|_F$:
\begin{align}
\label{eq:hat-clean}
    & \|\hat{H}_1 -H_1\|_F = \|\hat{H}_1 -(\tilde{H}_1-[\mathbf{0}_{mT\times M}^{\intercal} \Delta_1^{\intercal}]^{\intercal})\|_F\nonumber\\
    & \quad \leq \|\hat{H}_1-\tilde{H}_1\|_F+\|\Delta_1\|_F
\end{align}
Therefore:
\begin{align}
    \label{eq:enhanced_bound}
    & \|\hat{H}_1^\dagger - H_1^\dagger\|_F \leq \sqrt{2}\left(\frac{1}{\delta_{\text{SN}}(\hat{H}_1)}\right)^2\left(\|\hat{H}_1-\tilde{H}_1\|_F+\|\Delta_1\|_F\right)
\end{align}
We bound $\|\hat{Y}_f - Y_f\|_F$ similar to $\|\hat{H}_1 -H_1\|_F$:
\begin{align}
\label{eq:delta_Y_f_hat_bound}
    & \|\hat{Y}_f - Y_f\|_F \leq  \|\hat{Y}_f-\tilde{Y}_f\|_F+\|\Delta_2\|_F 
\end{align}
Using \eqref{eq:Y_f_bound}, \eqref{eq:h_bound}, \eqref{eq:enhanced_bound}, and \eqref{eq:delta_Y_f_hat_bound} we rewrite \eqref{eq:ini_bound_hat} as:
\begin{align}
    &\|\mathbf{\hat{y}_{pred}} - \mathbf{y_{pred}}\|_2 \leq \left(\|\tilde{Y}_f\|_F + \|\Delta_2\|_F\right)\sqrt{2}\left(\frac{1}{\delta_{\text{SN}}(\hat{H}_1)}\right)^2\nonumber\\
    & \quad \left(\|\hat{H}_1-\tilde{H}_1\|_F+\|\Delta_1\|_F\right)\left(\|\tilde{h}\|_2+\|\delta\|_2\right)\nonumber\\
    & \quad + \left(\|\hat{Y}_f-\tilde{Y}_f\|_F+\|\Delta_2\|_F\right)\|\hat{H}_1^\dagger\|_F\left(\|\tilde{h}\|_2+\|\delta\|_2\right)\nonumber\\
    & \quad +\|\hat{Y}_f\hat{H}_1^\dagger\|_F\|\delta\|_2\nonumber
\end{align}
Applying Lemma \ref{lemm:Lemma_1} proves the result. 
\begin{remark}
    In place of  \eqref{eq:bound_2}, a different upper bound can be obtained if we consider the approach in \cite[Section 6.3]{Vu_2021} to bound $\|\hat{H}_1-H_1\|_F$:
    \begin{align}
        & \|\hat{H}_1-H_1\|_F \leq \|\Delta_1\|_F+\|P_{U_2}\Delta_1P_{V_2}\|_F + 2(1+\sqrt{2})\|\Delta_1\|_F \nonumber\\
        & \quad \min\left\{\frac{2}{\sigma_r(H_1)}\|\Delta_1\|_F,1\right\} \nonumber
    \end{align}
    where $P_{U_2}$ and $P_{V_2}$ are orthogonal matrices. Multiplication by the orthogonal matrices preserves the Frobenius norm. Based on this and \eqref{eq:sigma_r_sq_bound} we have:
    \begin{align}
    \label{eq:vu_bound}
    \|\hat{H}_1-H_1\|_F\leq 2\|\Delta_1\|_F + 2(1+\sqrt{2})\|\Delta_1\|_F \nonumber\\
    \min\left\{\frac{2}{\delta_{\text{SN}}(\tilde{H}_1)}\|\Delta_1\|_F,1\right\}
    \end{align}
    However, in Theorem 2, we utilize \eqref{eq:hat-clean} to bound $\|\hat{H}_1-H_1\|_F$ instead of \eqref{eq:vu_bound}, because our numerical studies showed that it provides a tighter bound when the noise is modeled using \eqref{eq:noise_model_1}, \eqref{eq:noise_model_2}.
\end{remark}
\bibliographystyle{unsrt}
\bibliography{main}

\begin{thebibliography}{10}

\bibitem{MARKOVSKY202142}
Ivan Markovsky and Florian Dörfler.
\newblock Behavioral systems theory in data-driven analysis, signal processing, and control.
\newblock {\em Annual Reviews in Control}, 52:42--64, 2021.

\bibitem{tutorial}
Ivan Markovsky, Linbin Huang, and Florian Dörfler.
\newblock Data-driven control based on the behavioral approach: From theory to applications in power systems.
\newblock {\em IEEE Control Systems}, 43:28--68, 10 2023.

\bibitem{willems2005note}
Jan~C Willems, Paolo Rapisarda, Ivan Markovsky, and Bart~LM De~Moor.
\newblock A note on persistency of excitation.
\newblock {\em Systems \& Control Letters}, 54(4):325--329, 2005.

\bibitem{MARKOVSKY2023110657}
Ivan Markovsky, Eduardo Prieto-Araujo, and Florian Dörfler.
\newblock On the persistency of excitation.
\newblock {\em Automatica}, 147:110657, 2023.

\bibitem{markovsky2008data}
Ivan Markovsky and Paolo Rapisarda.
\newblock Data-driven simulation and control.
\newblock {\em International Journal of Control}, 81(12):1946--1959, 2008.

\bibitem{favoreel1999subspace}
Wouter Favoreel.
\newblock Subspace methods for identification and control of linear and bilinear systems.
\newblock {\em Ph. Doctor dissertation, Katholiek Universiteit Leuven}, 1999.

\bibitem{coulson2019data}
Jeremy Coulson, John Lygeros, and Florian D{\"o}rfler.
\newblock Data-enabled predictive control: In the shallows of the deepc.
\newblock In {\em 2019 18th European Control Conference (ECC)}, pages 307--312. IEEE, 2019.

\bibitem{9468332}
Linbin Huang, Jeremy Coulson, John Lygeros, and Florian Dörfler.
\newblock Decentralized data-enabled predictive control for power system oscillation damping.
\newblock {\em IEEE Transactions on Control Systems Technology}, 30(3):1065--1077, 2022.

\bibitem{eckart1936approximation}
Carl Eckart and Gale Young.
\newblock The approximation of one matrix by another of lower rank.
\newblock {\em Psychometrika}, 1(3):211--218, 1936.

\bibitem{markovsky2012low}
Ivan Markovsky.
\newblock {\em Low rank approximation: algorithms, implementation, applications}, volume 906.
\newblock Springer, 2012.

\bibitem{recentPaper}
Ivan Markovsky and Hamid Ossareh.
\newblock Finite-data nonparametric frequency response evaluation without leakage.
\newblock {\em Automatica}, 2023.

\bibitem{HO-FD:23}
Hamid Ossareh and Florian Dörfler.
\newblock Formula for estimating the frequency response of lti systems from noisy finite-length datasets.
\newblock {\em IEEE Control Systems Letters}, 7:3681--3686, 2023.

\bibitem{Stochastic_con}
Mingzhou Yin, Andrea Iannelli, and Roy~S. Smith.
\newblock Data-driven prediction with stochastic data: Confidence regions and minimum mean-squared error estimates.
\newblock In {\em 2022 European Control Conference (ECC)}, pages 853--858, 2022.

\bibitem{identifiability}
Ivan Markovsky and Florian Dörfler.
\newblock Identifiability in the behavioral setting.
\newblock {\em IEEE Transactions on Automatic Control}, 68:1667--1677, 2023.

\bibitem{eiv-book}
Torsten S{\"o}derstr\"om.
\newblock {\em Errors-in-Variables Methods in System Identification}.
\newblock Springer, 2018.

\bibitem{Vu_2021}
Trung Vu, Evgenia Chunikhina, and Raviv Raich.
\newblock Perturbation expansions and error bounds for the truncated singular value decomposition.
\newblock {\em Linear Algebra and its Applications}, 627:94--139, 2021.

\bibitem{Stewart1977}
Gilbert~W. Stewart.
\newblock On the perturbation of pseudo-inverses, projections and linear least squares problems.
\newblock {\em SIAM Review}, 19(4):634--662, 1977.

\bibitem{GLRA}
James~Weldon Demmel.
\newblock The smallest perturbation of a submatrix which lowers the rank and constrained total least squares problems.
\newblock {\em SIAM Journal on Numerical Analysis}, 24(1):199--206, 1987.

\end{thebibliography}

\end{document}